\documentclass{llncs}

\usepackage{microtype}
\usepackage{cite}

\usepackage{amsmath}
\usepackage{amssymb}

\usepackage{graphicx}

\usepackage{subfig}
\usepackage{wrapfig}

\spnewtheorem{observation}[theorem]{Observation}{\bfseries}{\itshape}
\spnewtheorem*{uremark}{Remark}{\bfseries}{\rm}

\usepackage[usenames,dvipsnames]{color}

\newcommand{\Vbest}{V_{\mathrm{best}}}
\newcommand{\Vapp}{V_{\mathrm{apx}}}
\newcommand{\wapp}{w_{\mathrm{apx}}}
\newcommand{\wopt}{w_{\mathrm{opt}}}
\let\eps\varepsilon
\intextsep 5pt

\newcommand{\A}{\mathcal{A}}

\DeclareMathOperator{\conv}{CH}
\DeclareMathOperator{\dist}{dist}
\DeclareMathOperator{\width}{width}


\title{How to Cover a Point Set with a V-Shape of Minimum Width%
  \thanks{Work on this paper has been supported by grant No.~2006/194
    from the U.S.-Israel Binational Science Foundation and by NSF
    Grant CCF-08-30691.  Work by Boris Aronov has also been supported
    by NSA MSP Grant H98230-10-1-0210.  An extended abstract of this
    paper appeared in the \emph{Proceedings of the Algorithms and Data
      Stuctures Symposium (WADS'11)} \cite{AD11}.}}

\author{Boris Aronov \and Muriel Dulieu}

\institute{Department of Computer Science and Engineering, Polytechnic 
 Institute of NYU, Brooklyn, NY~11201-3840, USA;
 \email{aronov@poly.edu}, \email{mdulieu@gmail.com}}

\begin{document}
\maketitle
\pagestyle{plain}
\thispagestyle{plain}
\begin{abstract}
  A balanced V-shape is a polygonal region in the plane contained in
  the union of two crossing equal-width strips.  
  It is delimited by two pairs of parallel rays that emanate from two points
  $x$, $y$, are contained in the strip boundaries, and are mirror-symmetric with respect to the line $xy$.
   The width of a balanced V-shape is the width of the strips.
   
  We first present an $O(n^2 \log n)$ time algorithm to compute, given
  a set of $n$~points $P$, a minimum-width balanced V-shape covering
  $P$.
  We then describe a PTAS for computing a $(1+\eps)$-approximation of
  this V-shape in time $O((n/\eps)\log n +
  (n/\eps^{3/2})\log^2(1/\eps))$.
  A much simpler constant-factor approximation algorithm is also described.
\end{abstract}

\section{Introduction}
\label{sec:introduction}

\paragraph*{Motivation.}
The problem we consider in this paper was motivated by the following
curve reconstruction question: One is given a set of points sampled
from a curve in the plane.  The sample is noisy in the sense that
the points lie near the curve, but not exactly on it.  One would like to
reconstruct the original curve from this data.  Clearly one has to
make some assumptions about the point set and the curve: If the curve is
``too wiggly'' or the noise is too large, little can be done.  One
approach is to assume that the curve is smooth and the sample points
lie not too far from it; see \cite{curve-noisy,curve-dey} and
references therein.\footnote{%
  See \cite{alt-guibas-survey} for a detailed survey of different
  notions of measuring similarity between geometric objects;
  is there a sensible (and relevant for our purposes) notion of
  closeness between a discrete unordered point set and a curve?}
Roughly speaking, one can then approximate a stretch of a curve by an
elongated rectangle (or strip) whose width is determined both by the
curvature of the curve and the amount of noise.  Refining this
approximation allows one to reconstruct the location of the curve and
its normal vector.

Complications arise when a curve makes a sharp turn, as it does not
have a well-defined direction near the point of turn.  It has been
suggested \cite{curve-noisy,proj-clustering} that one approach to handle this situation
is to replace fitting the set of points corresponding to a smooth arc of a
curve with a strip by fitting with a wedge-like shape that we call a ``balanced
V-shape;'' perhaps one might incorporate it in an algorithm such as
that of
\cite{Funke-Ramos}.  It is meant to model one thickened turn in a 
piecewise-linear curve; refer to the figure and precise
definitions below.

In this paper, we construct a slower exact algorithm for identifying a V-shape that best
fits a given set of points in the plane, then a
faster constant-factor approximation algorithm, and finally a
considerably more involved algorithm that produces a
$(1+\eps)$-approximation, for any positive~$\eps$.

The problem we solve is a new representative of a widely studied class
of problems, namely \emph{geometric optimization} or \emph{fitting}
questions; see
\cite{coreset-survey,random-opt-survey,alg-opt-survey,eff-alg-opt-survey}
and references therein.  Generally, the problem is to find a shape
from a given class that best fits a given set of points.  Classical
examples of such problems are linear regression in statistics, the
computation of the width of a point set (which constructs a
minimum-width strip covering the set), computing a minimum enclosing
ball, cylinder, or ellipsoid, a minimum-width spherical or
cylindrical shell, or a small number of strips of minimum width,
covering the point set; see~\cite{chan-apx-all,coreset-survey}.

Previous work most closely related to our problem is that of Glozman,
Kedem, and Shpitalnik \cite{GKS}.  They compute a double-ray center
for a planar point set $S$.  A~\emph{double-ray center} is a pair of
rays emanating from a common apex, minimizing the Hausdorff distance
between $S$ and the double ray.  While the shape they consider is not
exactly a V-shape, it is similar enough to be used for the same
purpose.  The exact algorithm they present runs in $O(n^3 \alpha (n)
\log^2 n)$ time, however, in contrast to our near-quadratic-time
algorithm; here $\alpha(n)$ is the inverse Ackermann function.

Another paper closely related to our problem is that of Agarwal,
Procopiuc, and Varadarajan \cite{2-line-center}.
It concerns the
2-line-center problem studied extensively in the past; see the
references in \cite{2-line-center}.  The goal is to cover a given set
of points by two strips of minimum possible width.  One
application is fitting \emph{two} lines to a point set.  There had
been several previously known near-quadratic-time exact algorithms for
the problem.  An $O(n\log n)$-time 6-approximation algorithm, and an
$O(n \log n + n\eps^{-2}\log(1/\eps)+ \eps^{-7/2} \log(1/\eps))$-time
$(1+\eps)$-approximation algorithm were presented in
\cite{2-line-center}.  A V-shape covering a point set is a special
case of covering a point set by two strips, so some of the
tools from \cite{2-line-center} apply to our problem as well.

\paragraph*{Problem statement and results.}
\begin{wrapfigure}[9]{r}{0.47\textwidth}
  \scalebox{0.4}{\input{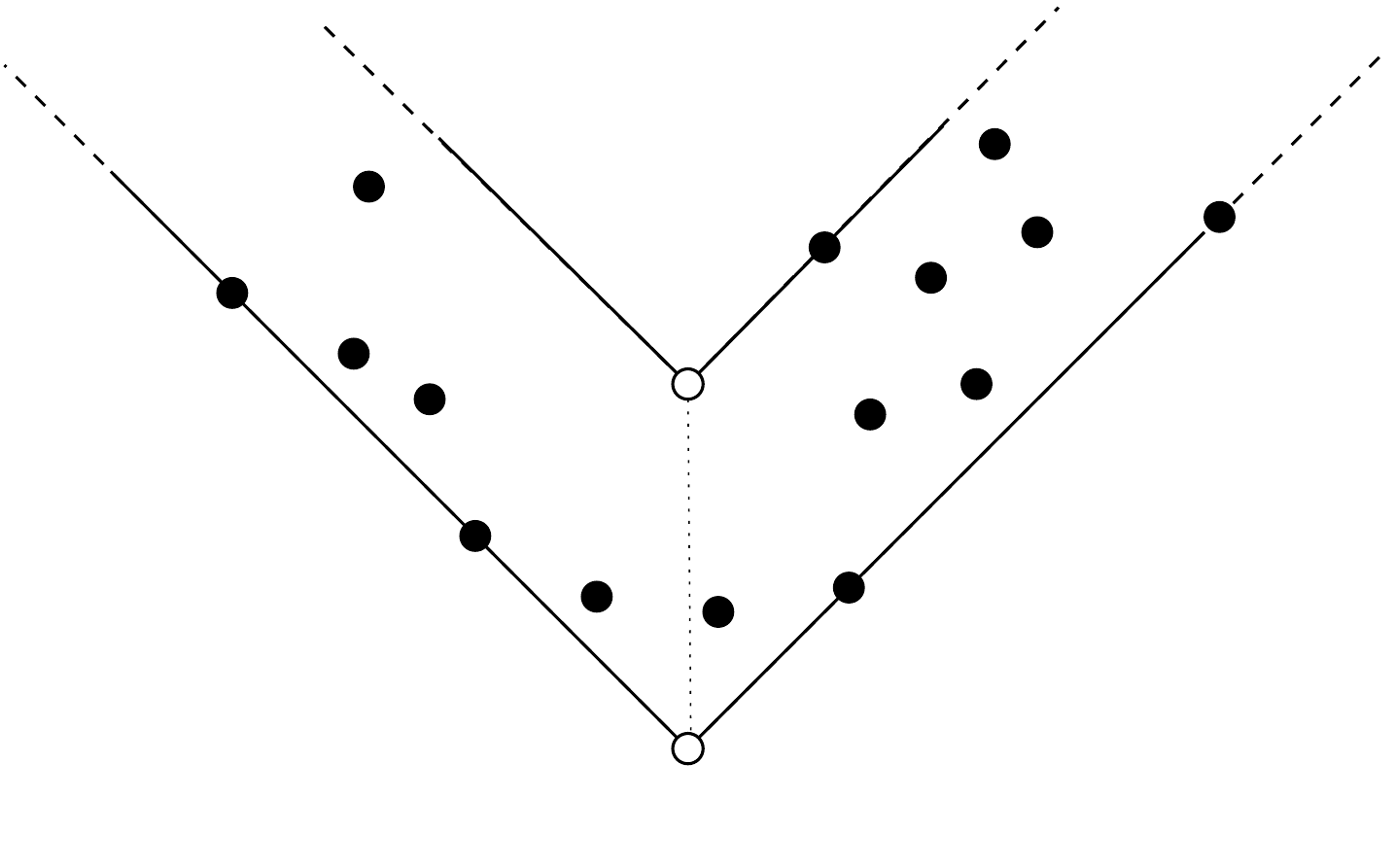_t}}  
\end{wrapfigure}
In this paper, we focus on the class of polygonal regions in the plane
that we call balanced V-shapes.   
A \emph{balanced V-shape} has 
two \emph{vertices} $x$ and $y$ and is delimited by two pairs of
parallel rays. 
One pair of 
parallel rays emanate from $x$ and $y$ on one side of
the line $xy$ and the other pair of rays emanate from $x$ and $y$ on
the other side of $xy$, symmetrically with respect to $xy$  
(see the above figure).
%
%
%
%
In particular, a balanced V-shape is completely contained in the union
of two crossing strips of equal width.  Its \emph{width} is the width
of the strips.

Consider a point set $P$ of $n$ points in the plane.  
We describe, in Section~\ref{sec:algorithm}, an $O(n^2\log n)$ time 
algorithm that computes a balanced V-shape with minimum width covering $P$. 

Our algorithm actually identifies a particular type of V-shapes that
we call ``canonical'' (see below for definitions) and enumerates all
minimum-width canonical V-shapes covering~$P$; as some degenerate
$n$-point sets have $\Theta(n^2)$ such V-shapes (
see Section~\ref{sec:lower-bound}), this approach will probably not yield
a subquadratic algorithm.  This leaves open the problem of how quickly
one can identify just \emph{one} minimum-width V-shape covering $P$.

In Section~\ref{sec:13-approx}, we present an $O(n \log n)$ algorithm
that constructs a V-shape covering $P$ with width at most 13 times the
minimum possible width.  In Section~\ref{sec:one+eps}, we show how to
construct a $(1+\eps)$-approximation in time $O((n/\eps)\log n +
(n/\eps^{3/2})\log^2(1/\eps))$, starting with the 13-approximation
obtained earlier.

\section{Reduction to canonical V-shapes}
\label{sec:canonical}

In the remainder of this paper, for simplicity of presentation and 
unless noted otherwise, we assume
that the points of $P$ are \emph{in general position}: no three points are
collinear and no two pairs of points define parallel lines. 
All algorithms can be adapted to degenerate inputs without asymptotic slowdown.

We will find it convenient to consider a larger class of objects,
namely V-shapes. 
A (\emph{not necessarily balanced}) \emph{V-shape} (refer to the
figure below) is a polygonal
\begin{wrapfigure}[9]{r}{0.45\textwidth}
  \scalebox{0.4}{\input{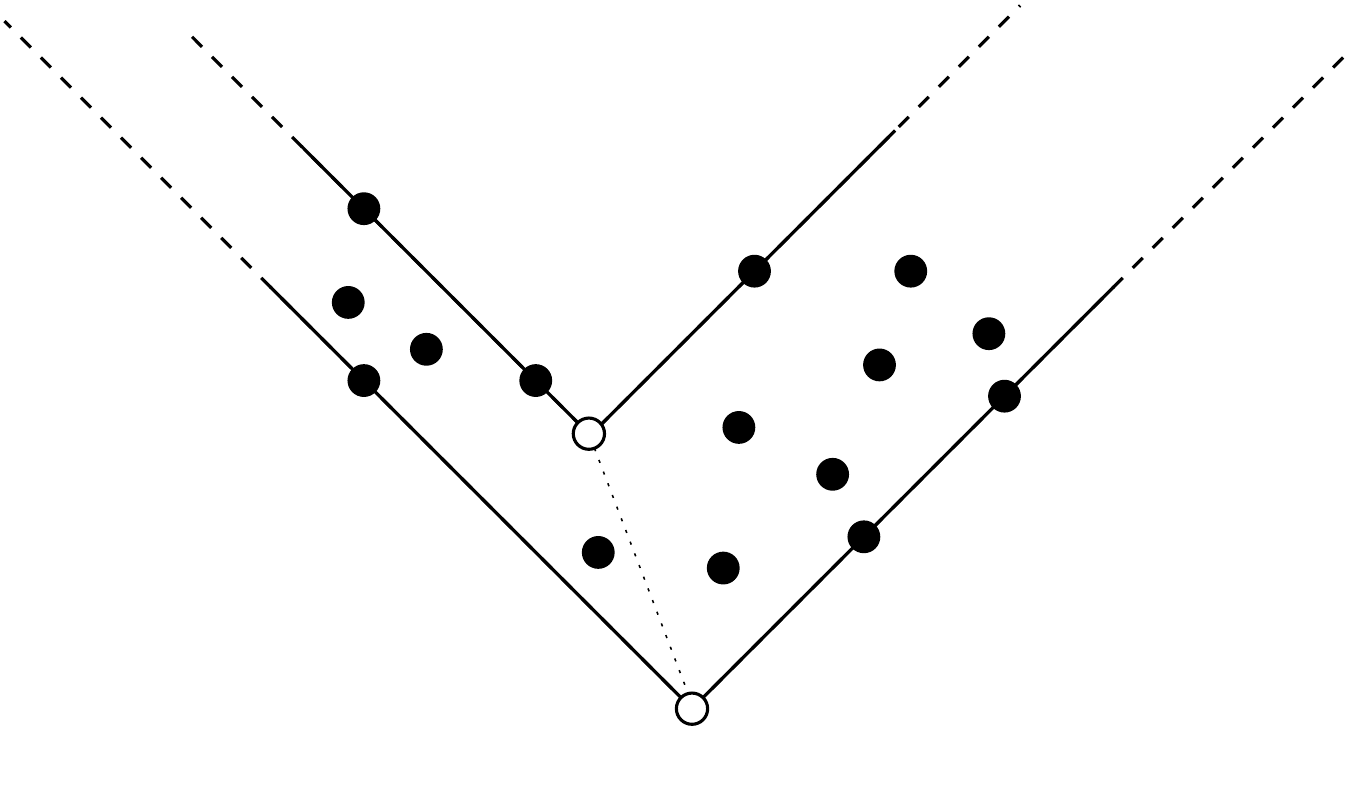_t}}  
\end{wrapfigure}
region similar to a balanced V-shape except that the widths of its two
arms need not be the same.  More formally, a \emph{V-shape}~$V$ is a
polygonal region bounded by two pairs of parallel rays emanating from
its two \emph{vertices} $x$ and $y$.  One pair of parallel rays
(\emph{left rays} $X_1$ and $Y_1$) lies on the left side of the
directed line $xy$, while the other pair (\emph{right rays} $X_2$ and
$Y_2$) lies on its right side.  The \emph{inner rays} $X_i$ emanate
from $x$, while \emph{outer rays} $Y_i$ emanate from $y$.  $X_1\cup
X_2$ is the \emph{inner boundary} of $V$, while $Y_1\cup Y_2$ is its
\emph{outer boundary}.  The \emph{left arm} of $V$, $L=L(V)$, is its
portion on the left of $xy$; i.e., it is the region bounded by rays
$X_1$ and $Y_1$ and segment $xy$.  The \emph{width} of the left arm,
$\width(L(V))$, is the distance between $X_1$ and $Y_1$.  The right
arm and its width are defined analogously.  The \emph{width} of $V$,
$\width(V)$, is the larger of the widths of its two arms.  $V$ is
contained in the union of two strips $S_1$ and $S_2$: $S_i$ is
delimited by the lines containing $X_i$ and $Y_i$, respectively; we
refer to $S_1$ and $S_2$ as the \emph{left} and \emph{right}
\emph{strip} of $V$, respectively.

A minimum-width balanced V-shape can be obtained from a minimum-width 
V-shape by widening the narrower arm until the widths of the arms are equal.

In the remainder of the paper, the $n$-point set $P$ is fixed.  To
avoid trivial cases, we assume that $n\geq 5$.  By the general
position assumption, all points of $P$ cannot be collinear, nor can
$P$ be covered by a V-shape of zero width. 
We need not consider V-shapes with all points in one strip as according
to Lemma~\ref{lem:no-empty-strip}, 
such a V-shape does not have minimum width.
\begin{lemma}
  \label{lem:no-empty-strip}
  In a positive-width minimum-width V-shape $V$ covering $P$, it is
  not possible that one of the strips already contains $P$ in its
  entirety.
\end{lemma}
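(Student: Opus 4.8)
The plan is to argue by contradiction: suppose $V$ is a positive-width minimum-width V-shape covering $P$ and that one of its strips, say the left strip $S_1$, already contains all of $P$. Since $P$ then lies in a single strip, $\width(S_1)\le\width(V)$, so it suffices to exhibit some V-shape covering $P$ whose width is strictly smaller than $\width(S_1)$; this contradicts the minimality of $V$. The idea is to take the strip $S_1$, make it tight against $P$ on one side, and then \emph{bend} it slightly at the contact point: bending a strip of width $\omega$ by a small angle replaces it with two arms each of width $\omega\cos\gamma<\omega$, while a sufficiently small bend keeps all points covered.

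First I would normalize the strip. Slide one boundary line of $S_1$ inward, keeping it parallel, until it first touches $P$. I claim that for at least one of the two boundaries this first contact is a single vertex $v$ of $P$: if both extreme contacts were edges of $\conv(P)$, these two edges would be parallel (both parallel to $S_1$), contradicting the general-position assumption that no two pairs of points determine parallel lines. I would bend at that side and rename the resulting strip $S'\supseteq P$, of width $w'\le\width(S_1)$, choosing coordinates so that $S'=\{\,0\le y\le w'\,\}$ with $v=(v_x,w')$ its unique topmost point; every other point of $P$ then has $y$-coordinate strictly less than $w'$.

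Next I would build the bent V-shape explicitly. Place the outer vertex at the apex $v$ and the inner vertex at $p=(v_x,0)$, with the segment $pv$ playing the role of the line $xy$, and let the left and right strips have directions tilted from the horizontal by a small angle $\gamma>0$, symmetrically, so that the outer boundary is a shallow tent with apex $v$ and the inner boundary is a shallow valley with vertex $p$. A direct computation gives each arm a width of $w'\cos\gamma<w'$. It remains to check coverage. The inner valley dips below the line $y=0$ on both sides of $p$, so every point of $P$ (all of which satisfy $y\ge0$) lies above it; and since $v$ is the unique topmost point, every other point lies strictly below height $w'$, so for all sufficiently small $\gamma$ the finitely many points of $P$ lie on or below the outer tent as well. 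Thus $P$ is covered by a genuine V-shape of width $w'\cos\gamma<w'\le\width(S_1)\le\width(V)$, the desired contradiction.

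The main obstacle is this coverage verification for the outer tent: one must use that $v$ is a \emph{single} topmost contact (guaranteed by general position) so that tilting the outer boundary downward by a small angle drops no point out of the V-shape, and then invoke the finiteness of $P$ to pick one $\gamma$ that works for all points simultaneously. The remaining points---confirming that the tilted strips actually cross (so that the region is a legitimate, non-degenerate V-shape with the correct inner/outer ray assignment) and that each point falls within its arm's ray extent and on the correct side of the crease $pv$---are routine once the width computation and the single-apex normalization are in place.
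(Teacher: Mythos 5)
Your proof is correct, but it takes a genuinely different route from the paper. The paper's argument is global: it cuts the covering strip $S_1$ along its median line into two parallel substrips of width $w/2$, observes that this pair covers $P$, and then approximates the pair by a very flat V-shape (vertex $y$ pushed far away along the median line, rays near-parallel to it), obtaining a covering V-shape of width $w/2+\eps$ --- nearly a factor-two improvement --- with no use of the general position assumption. Your argument is local: you tighten the strip against $P$, invoke general position (no two pairs of points define parallel lines) to guarantee that at least one boundary of the tight strip meets $\conv(P)$ in a \emph{single} vertex $v$, and then bend the strip at $v$ by a small angle $\gamma$, which yields an explicit covering V-shape of width $w'\cos\gamma$. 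This is only an infinitesimal improvement, but any strict improvement contradicts minimality, so it suffices; your construction also avoids the paper's limiting/approximation step and produces a balanced V-shape outright. The trade-off is robustness: your normalization genuinely needs general position --- for a degenerate input such as points spanning two parallel hull edges (e.g., a rectangle), both tight boundaries contain edges, and bending at any single apex drops the other topmost point out of the cover --- whereas the paper's median-cut argument works for any non-collinear $P$. Since the paper assumes general position throughout the section containing this lemma, your use of it is legitimate, but it is worth being aware that the paper's proof is the one that survives the ``algorithms can be adapted to degenerate inputs'' claim without modification.
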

\begin{proof}
  Suppose $S_1:=S_1(V)$ covers $P$.  Let $w>0$ be its width.  We argue
  that there is a V-shape covering $P$ of width $w/2+\eps$, for any
  positive $\eps$, so $V$ does not have minimum width.  Indeed, let
  $\ell$ be the median line of $S_1$.  Cut $S_1$ by $\ell$ into two
  parallel strips of width $w/2$.  They cover $P$.  They do not form a
  V-shape, but they can be approximated arbitrarily closely by a
  V-shape near $P$, by placing its vertex~$y$ sufficiently far to the
  left of $P$ along $\ell$, $x$ at the rightmost point of $\conv(P) \cap
  \ell$, and the boundary rays near-parallel to $\ell$. \hfill $\Box$
\end{proof}

Unless otherwise stated, the only particular V-shapes we will be interested in are the ones we call canonical. A V-shape is \emph{canonical}, if the bounding rays of each arm pass through exactly
three points of $P$; more precisely if $|X_i\cap P|+|Y_i\cap P|\geq3$,
for $i=1,2$ (recall that, by our general position assumption,
$|X_i\cap P|,|Y_i\cap P|\leq 2$); in addition, we require that each arm
of a canonical V-shape covering~$P$ is locally of minimum width, i.e.,
neither arm can be narrowed by an infinitesimal motion.

The reason why we consider only canonical V-shapes is that 
at least one minimum-width V-shape covering $P$ is canonical (see
Lemma~\ref{CanonicalV-shape} below), so we can 
confine the search to canonical V-shapes and 
discard any non-canonical V-shapes considered by our algorithm.
%

\begin{lemma}
\label{CanonicalV-shape}
At least one minimum-width V-shape covering $P$ is canonical.
\end{lemma}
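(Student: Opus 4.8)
The plan is to start from an arbitrary minimum-width V-shape $V$ covering $P$, of width $w>0$, and to transform it, without ever increasing its width, into a canonical one. By Lemma~\ref{lem:no-empty-strip} neither strip alone contains $P$, so the line $xy$ splits $P$ into two nonempty subsets $P_L$ and $P_R$ that must be covered by the left and right arms, respectively; in particular both arms are genuinely constrained. I regard $V$ as specified by six real parameters, the two vertices $x,y\in\mathbb{R}^2$ and the two strip directions $\theta_1,\theta_2$, since the left (resp.\ right) strip is the region between the line through $x$ and the line through $y$, both of direction $\theta_1$ (resp.\ $\theta_2$), so that $\width(L)$ and $\width(R)$ are then determined. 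I will argue by a potential: I track the number of point--ray incidences $\sum_{i}(|X_i\cap P|+|Y_i\cap P|)$ and repeatedly perturb $V$ so that this count strictly increases while coverage and the value $\width(V)=\max(\width(L),\width(R))$ are preserved. Since the count is bounded, the process terminates at a V-shape meeting the incidence requirement, which I then verify is locally of minimum width.

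First I establish local minimality. Rotating one arm's strip direction $\theta_i$ about the fixed vertices $x,y$ leaves the other arm untouched and changes only that arm's width. If the wider arm could be narrowed by such a rotation while still covering its points, $\width(V)$ would drop, contradicting the minimality of $V$; hence the wider arm is already blocked, i.e.\ some point of $P$ lies on one of its boundary rays. The narrower arm can be narrowed by such a rotation without affecting the maximum, so I rotate it until a point of $P$ blocks it as well. After this phase each arm carries at least one incidence. The more delicate task is to raise the count to the required three per arm. Here I exploit the remaining freedom to slide $x$ and $y$: keeping an established contact pinned, I translate $x$ (resp.\ $y$) along its boundary line, or rotate the arm's strip about the pinned contact point, and move until a second and then a third point of $P$ is caught by a ray of that arm, all the while keeping that arm's width non-increasing and maintaining coverage.

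The main obstacle is the coupling between the two arms: the vertices $x,y$, and hence the line $xy$, are shared, so a motion that gains an incidence for one arm can widen the other arm or carry a point of $P$ across $xy$, changing the partition $(P_L,P_R)$. I expect to handle this by a careful case analysis, always perturbing within the region $\{\,\width(V)\le w\,\}$: each motion is chosen so that an arm whose width currently equals $w$ is only narrowed or left unchanged, while the slack in the other arm absorbs the perturbation; points of $P$ are kept strictly off the open segment $xy$ (using general position and the fact that a minimum-width V-shape has positive width), so the partition is locally stable and no point is lost. Combining the contacts produced for the two arms yields $|X_i\cap P|+|Y_i\cap P|\ge 3$ for $i=1,2$. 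Finally, since every step was a coverage-preserving, width--non-increasing motion and the terminal configuration has three contacts pinning each arm, no infinitesimal motion can narrow either arm; hence the resulting minimum-width V-shape covering $P$ is canonical.
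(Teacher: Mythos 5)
Your overall strategy---start from a minimum-width covering V-shape, invoke Lemma~\ref{lem:no-empty-strip} to ensure both arms are constrained, and deform the shape without increasing its width until each arm has three point--ray incidences---is the same as the paper's. But the core of the argument is missing. The difficulty you correctly identify (the arms share the vertices $x,y$, so a move that gains a contact for one arm can widen the other or shift the partition of $P$) is dispatched with ``I expect to handle this by a careful case analysis,'' which is never carried out; that case analysis \emph{is} the proof. The paper resolves exactly this point with an explicit, ordered sequence of decoupled moves: translate the outer boundary in the direction of $Y_1$ (this keeps the line of $Y_1$ fixed and narrows only $S_2$ until $Y_2$ meets a point of $P$), then in the direction of $Y_2$; then translate the inner boundary opposite to $X_1$ and then to $X_2$. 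Each such translation keeps one strip's bounding line fixed while narrowing the other strip, so previously created contacts are never destroyed and neither width increases. Finally, an arm with only two contacts has \emph{both} of its bounding lines rotated simultaneously about the two contact points, which leaves the other strip's lines---hence the other arm's width and its contacts---entirely unchanged. Your proposed moves (translating $x$ or $y$ along a boundary line, rotating an arm about a \emph{single} pinned contact) have neither property: rotating about one contact destroys any other contact of that arm, and moving $x$ moves both inner rays at once. Consequently your potential (the total incidence count) is not shown to be monotone, and termination at the desired configuration is not established.

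The final step is also wrong as stated: ``the terminal configuration has three contacts pinning each arm, [so] no infinitesimal motion can narrow either arm'' is a non sequitur. Three contacts, say $s_1,s_2$ on one ray and $s_3$ on the other, do \emph{not} pin an arm: if one of the angles $\angle s_3 s_1 s_2$, $\angle s_1 s_2 s_3$ is non-acute, the two-pivot rotation still narrows the strip. This is precisely why the paper's rotation step continues ``unless the angles are acute,'' which is what delivers both the local-minimality half of canonicity and Corollary~\ref{Cor:angle}. Since canonical V-shapes are required to be locally of minimum width, your terminal configuration need not be canonical. A smaller slip: when the two arms have equal width, minimality of $V$ does not force ``the wider arm'' to be blocked under your rotation, since narrowing one arm alone does not decrease the maximum; this is fixable, but as written even your first phase is imprecise.
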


\begin{proof}

In order to prove this lemma, we first explain why we can assume
that $|X_i\cap P|+|Y_i\cap P|\geq3$.  We then discuss how
the boundary points may be positioned on the arms.

By Lemma~\ref{lem:no-empty-strip}, in no minimum-width covering V-shape
one strip covers~$P$ completely.
Hence in the following we disregard this
possibility.

We present a sequence of transformations, starting with a minimum-width
V-shape covering~$P$, which do not
increase its width and end in a canonical V-shape.  We begin by
translating its outer boundary in the direction of first $Y_1$ and
then $Y_2$ to ensure that each of the outer rays contains a point of $P$;
this point might be $y$.  Now translating the inner boundary, first in
the direction opposite to that of $X_1$, and then that of $X_2$, we
guarantee that each of $X_1$, $X_2$ contains a point of $P$.  (By
Lemma~\ref{lem:no-empty-strip}, an outer ray $Y_i$ cannot meet its
corresponding inner ray $X_i$ without meeting a point of $P$, as we
started with a minimum-width V-shape.)  Now consider an arm (say,
left) of the resulting V-shape.  We will further transform it so that
$|X_1 \cap P| + |Y_1 \cap P|>2$.  We have so far ensured that each of
$X_1$ and $Y_1$ contains at least one point.  If exactly one point is
present on each left ray, $X_1$ and $Y_1$ can be rotated so that the
width of $S_1$ shrinks.  
This process stops either when $S_1$ collapses to a line (in which
case it is easy to check that $Y_1$ contains two points of $P$ and
$X_1$ contains at least one) or when three points $s_1,s_2,s_3$
lie in $X_1 \cup Y_1$, say $s_1$ and $s_2$ on one ray and $s_3$ on the
other.  In the latter case, unless the angles $\angle s_3 s_1 s_2$ and $\angle s_1 s_2
s_3$ are acute, the rotation can proceed, further narrowing $S_1$.
Corollary~\ref{Cor:angle} follows from this last condition.

Repeating the process with the right arm, we arrive at a covering
V-shape whose width is no larger than that of the original V-shape,
with the property that (a)~it satisfies Corollary~\ref{Cor:angle} if there is no zero-width arm,
(b)~each bounding ray contains a point of $P$, and (c)~$|X_1
\cap P| + |Y_1 \cap P| \geq 3$ and $|X_2 \cap P| + |Y_2 \cap P| \geq
3$; i.e., the resulting covering V-shape is canonical and as good or
better in terms of width.  Hence, indeed, it is sufficient to examine
only canonical V-shapes.  \hfill $\Box$

\end{proof}

\begin{corollary}
  \label{Cor:angle}
  Let $P$ be a point set and $V$ be a minimum-width canonical V-shape
  covering it.  Let $s_1, s_2, s_3 \in X$ be the points on the
  boundary of a non-zero-width arm of $V$, with $s_1$ and $s_2$ on one
  ray and $s_3$ on the other.  Then the angles $\angle s_3 s_1 s_2$
  and $\angle s_1 s_2 s_3$ are acute.
\end{corollary}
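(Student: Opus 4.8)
The plan is to show that if either base angle fails to be acute, then the arm's bounding strip can be rotated infinitesimally so as to strictly reduce its width, contradicting the defining property of a canonical V-shape that neither arm can be narrowed by an infinitesimal motion. First I would fix notation: the arm in question is a strip of positive width $w$ bounded by two parallel lines, $\ell$ through $s_1,s_2$ and $\ell'$ through $s_3$, so that $w=\dist(s_3,\ell)$. Because $V$ is canonical, the bounding rays of this arm pass through exactly three points of $P$; combined with general position (at most two points per line), this means $s_1,s_2,s_3$ are the \emph{only} points of $P$ on the arm's boundary, and every other point of $P$ lies strictly inside the arm and hence cannot obstruct an infinitesimal rotation. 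It therefore suffices to analyze how the width of the triangle $s_1s_2s_3$ behaves under rotation of the strip.

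Next I would place $\ell$ on the $x$-axis with $s_1=(x_1,0)$, $s_2=(x_2,0)$, $x_1<x_2$, and $s_3=(a,w)$, and measure the width as the extent $W(\alpha)$ of $\{s_1,s_2,s_3\}$ in direction $(\sin\alpha,\cos\alpha)$, so that $W(0)=w$. A short computation shows that for small $\alpha$ the point $s_3$ stays extremal on one side, while on the other side the extremal point is $s_1$ when $\alpha>0$ and $s_2$ when $\alpha<0$; this yields
\[
  W(\alpha)=w\cos\alpha+(a-x_1)\sin\alpha\ \ (\alpha>0),
  \qquad
  W(\alpha)=w\cos\alpha+(a-x_2)\sin\alpha\ \ (\alpha<0),
\]
so the one-sided derivatives are $W'(0^{+})=a-x_1$ and $W'(0^{-})=a-x_2$, and $W$ has a corner at $\alpha=0$ coming from the switch of the supporting vertex between $s_1$ and $s_2$.

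I would then convert the angle conditions into the position of $a$: the foot of the perpendicular from $s_3$ to $\ell$ is $(a,0)$, and one checks directly that $\angle s_3 s_1 s_2$ is acute if and only if $a>x_1$, and $\angle s_1 s_2 s_3$ is acute if and only if $a<x_2$. If $\angle s_3 s_1 s_2$ were not acute, then $a\le x_1$: for $a<x_1$ we have $W'(0^{+})<0$, so a small positive rotation strictly narrows the arm; for $a=x_1$ the first-order term vanishes but $W(\alpha)=w\cos\alpha<w$ for small $\alpha>0$, so the arm again narrows. The symmetric argument with $\alpha<0$ handles $\angle s_1 s_2 s_3$. Since such a narrowing rotation is a legitimate V-shape transformation, exactly as in the proof of Lemma~\ref{CanonicalV-shape}, and this contradicts the canonicity of $V$, both angles must be acute.

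The step I expect to require the most care is the degenerate right-angle case $a=x_1$ (and, symmetrically, $a=x_2$): there the first-order term vanishes, and one must retain the exact expression $W(\alpha)=w\cos\alpha$ to see that the width still strictly decreases—overlooking this would incorrectly permit a right angle. The other point that must be pinned down is the reduction itself, namely that the three boundary points are the only binding constraints and that interior points of $P$ impose no first-order obstruction; this is precisely what allows the whole question to be settled by the width function of the single triangle $s_1s_2s_3$.
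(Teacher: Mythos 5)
Your proof is correct and follows essentially the same route as the paper: the paper obtains Corollary~\ref{Cor:angle} directly from the stopping condition of the ray-rotation process inside the proof of Lemma~\ref{CanonicalV-shape} (if either base angle were non-acute, the rotation could proceed and further narrow the arm, contradicting the local-minimality requirement in the definition of a canonical V-shape), which is exactly your rotation argument. Your coordinate computation of the one-sided derivatives of the width function, and your explicit treatment of the right-angle case $a=x_1$, simply make precise what the paper asserts in a single sentence.
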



%
%
All canonical minimum-width V-shapes fall into the following three
categories: 
\begin{description}\itemsep 0pt \parsep 0pt \parskip 0pt
\item[both-outer] Each outer ray contains exactly two points of $P$, and each
  inner ray contains at least one. 
\item[inner-outer] On one arm of the V-shape, the inner ray contains exactly two points of $P$; on the other
  arm, the outer ray contains exactly two points. The other rays contain at least one point of $P$.
\item[both-inner] Each inner ray contains exactly two points of $P$ and each outer 
  ray contains at least one.
\end{description}

\section{Computing a canonical minimum-width V-shape}
\label{sec:algorithm}

To find a canonical minimum-width V-shape covering $P$, we will search
independently for the best solution for each of the three types identified
above and output the V-shape that minimizes the width.  Let $H$ be the
convex hull of $P$.

\paragraph*{V-shapes of both-outer type.}

Consider a covering V-shape~$V$ with outer rays $Y_1,Y_2$
\begin{wrapfigure}[8]{r}{0.5 \textwidth}
  \scalebox{0.38}{\input{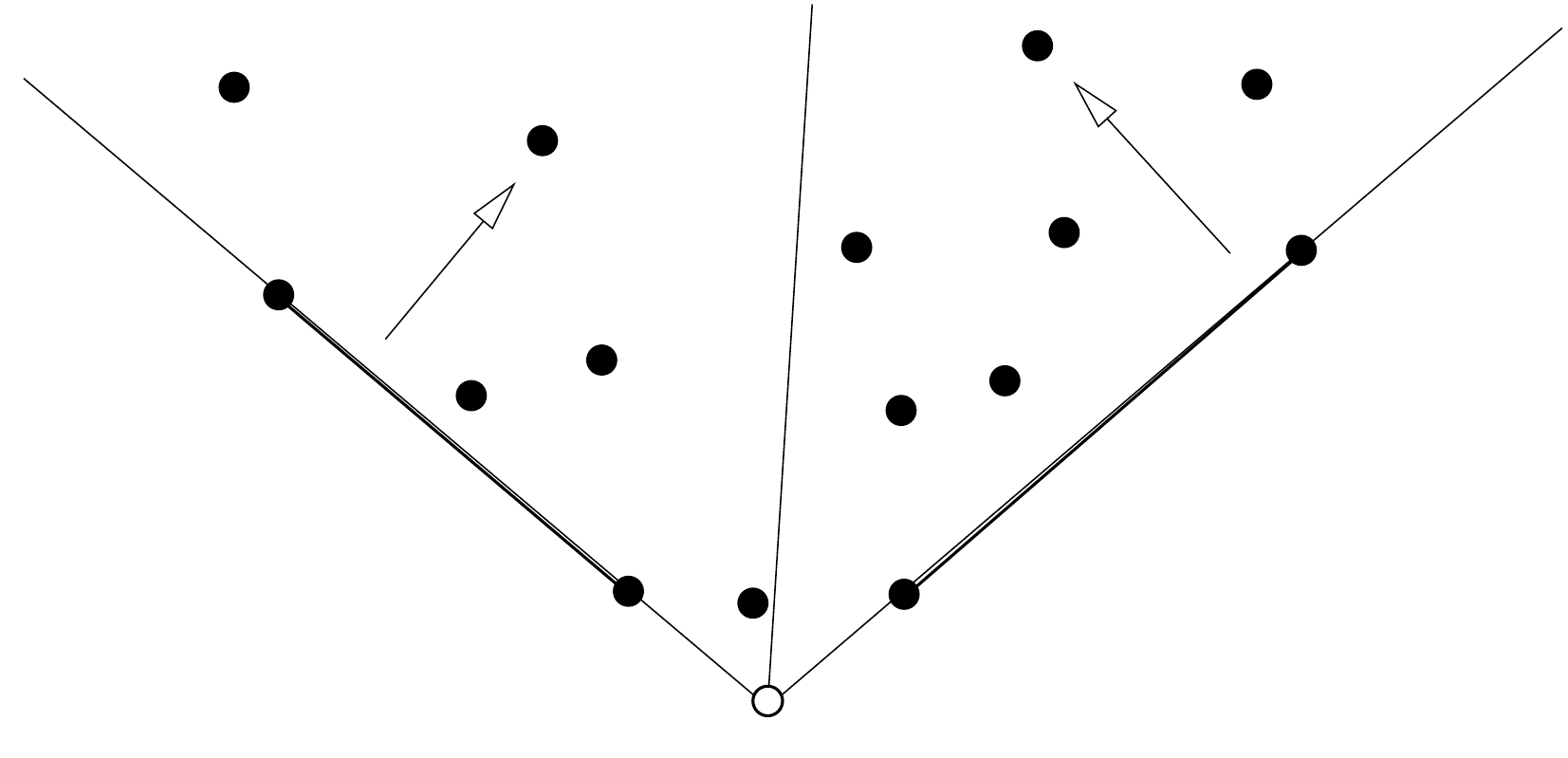_t}}
\end{wrapfigure}
containing edges $e_1,e_2$ of $H$, respectively; refer to the figure
on the right.  Let $\ell$ be the bisector of the angle $\angle Y_1yY_2$. 
Notice that $V$ is not minimal unless its width is given by the
largest distance from a point in $P$ to its closest outer ray.
Therefore, we can assume that points of $P$ left of $\ell$ belong to the
left arm of $V$ and points right of $\ell$---to its right arm.


Thus, given $Y_1$, $Y_2$, and $\ell$ it is sufficient to determine the
furthest point from~$Y_1$ to the left of $\ell$ and the furthest point
from $Y_2$ to the right of $\ell$.  The larger distance determines the
width of $V$.  This can be accomplished by building a data structure
$D(P)$ on $P$ that supports the following queries: Given a
halfplane~$h$ and a direction~$d$, return an extreme point of $P \cap
h$ in direction $d$.  $O(n^2)$ queries are sufficient to enumerate all
choices of $e_1,e_2$ and identify the best both-outer-type V-shape.
$D(P)$ can be constructed in $O(n^2 \log n)$ time and supports
logarithmic-time queries, resulting in total running time of
$O(n^2\log n)$.

$D(P)$ is constructed as follows: We build the arrangement
$\A=\A(P^*)$ of lines dual to points of $P$.  Cells of $\A$ correspond
to different ways to partition $P$ by a line.  We construct a directed
spanning tree $T$ of the cells of $\A$, starting with the bottommost
cell and allowing only arcs from a cell~$f$ to a cell immediately
above~$f$ and sharing an edge with it; we use $P_f \subset P$ to denote
the convex hull of the set of points whose dual lines lie below $f$.  Using $T$ as the
history tree, we store the convex hull $P_f$ for every face $f \in
A$, in a fully persistent version~\cite{Bob} of the semi-dynamic
convex hull data structure of~\cite{Preparata}.  We also preprocess
$\A$ for point location.  Given a query (say, upper) half-plane $h$ and
direction~$d$, we locate the face $f$ of $\A$ containing the point
dual to the bounding line of $h$ and consult the data structure
associated with $f$ and storing $P_f=P\cap h$ to find the extreme
point of $P_f$ in direction~$d$, all in logarithmic time.  
\paragraph{V-shapes of inner-outer type.}


In this section, we describe how to find a minimum-width canonical
V-shape covering $P$ and having exactly one edge of $\conv(P)$, say~$e$,
on its outer boundary; it
contains two points of $P$ on the inner bounding ray of its other arm.
We handle each choice of $e$ independently, in $O(n \log n)$ time,
yielding overall $O(n^2\log n)$ running time.

\begin{wrapfigure}[15]{r}{0.45\textwidth}
 \includegraphics[scale=0.85]{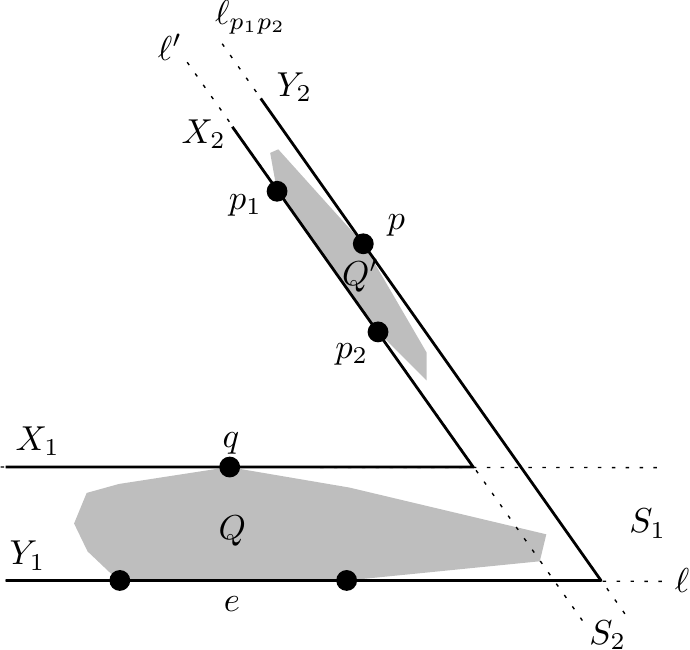}
\end{wrapfigure}

Having fixed an edge $e$ of $\conv(P)$, consider a (minimum-width
canonical) V-shape~$V$ covering~$P$ that has $e$ on its boundary. For
ease of description, suppose $Y_1 \supset e$, $X_2$ contains two
points $p_1,p_2 \in P$, while both $Y_2$ and $X_1$ contain at least one point
of $P$ each, denoted $p$ and $q$, respectively; see the figure on the right.

Let $\ell$ be the line containing $e$, and $\ell'$ be the line
containing $e':=p_1p_2$.  Set $Q:=S_1\cap P$ and $Q':=P\setminus Q$.  We
observe that
\begin{enumerate} \itemsep 0pt \parsep 0pt \parskip 0pt
\item $Q$ is the set of points of $P$ at distance at most
  $\dist(q,\ell)$ from $\ell$;
\item $p_1p_2$ is an edge of $\conv(Q')$;
\item $Y_2$ is contained in a supporting line $\ell_{p_1p_2}$ of
  $\conv(Q')$ (which must also be a supporting line to $\conv(P)$ for 
  $V$ to cover $P$) parallel to $\ell'$; this line lies on
  the same side of $\ell'$ as $Q'$;\footnote{%
    If $\width(R(V))=0$, we have $Q' \subset \ell'$ and $Y_2 \subset
    \ell'$.}
   and
\item $\width(V)=\max(\width(S_1),\width(S_2))=
  \max(\dist(q,\ell),\dist(\ell' ,\ell_{p_1p_2}))$.
\end{enumerate}
Our algorithm enumerates all choices for the point $q$, in order of
decreasing distance from $\ell$. 
  For the current choice of $q$, it
maintains (the boundary of)
$\conv(Q')$, say as an AVL~tree, and, for each edge $e'$ of
$\conv(Q')$, 
the distance from $\ell'$ to the furthest point of $CH(P)$ to the
right of (i.e., ``beyond'') $\ell'$.
Edges with distances are
stored in a min-heap; the minimum such distance gives the minimum
width for $S_2$ for the current choice of $S_1$; the larger of the two widths
determines the width of the current V-shape.  We record the best width
of any V-shape encountered in the process.

The algorithm is initialized with the set $Q'$ 
containing the two points of $P$ furthest from $\ell$ (the case where $Q'$ contains 
only one point treated by the both-outer case as the zero-width strip $S_2$ can be rotated until it 
contains one edge of $\conv(P)$).  A generic step of the
algorithm involves moving the current point $q$ from $Q$ to $Q'$.  We
update the convex hull of $Q'$ by computing the supporting tangents
from $q$ to the old hull, in $O(\log n)$ time.  For the two new hull
edges $e_1$, $e_2$, we compute the corresponding supporting lines
$\ell_{e_1},\ell_{e_2}$ of $\conv(P)$, using a suitable balanced-tree
representation of $\conv(P)$, also in logarithmic time.  We add the new
edges with the corresponding widths to the min-heap, after removing
from it the entries of all the eliminated edges of $\conv(Q')$.  The
root of the min-heap yields the best width for $S_2$ for the current
partition $\{Q,Q'\}$.
The algorithm requires presorting points by distance from $\ell$ and
then a linear number of balanced-search-tree and heap operations
(since the number of edges inserted is less than $2n$ and each cannot
be deleted more than once), for a total running time of $O(n \log n)$
for a fixed $e$, as claimed.

Working through the entire set $P$ (except for the endpoints of $e$), in
order of decreasing distance from $\ell$, growing $Q'$ and shrinking
$Q$, we obtain a sequence of fewer than~$n$ V-shapes which include
all the canonical minimum-width V-shapes covering $P$ with $e$ on its
outer boundary and two other points of $P$ lying on the opposite arm's
inner boundary (it may include non-canonical V-shapes as well, but it
is not difficult to check  
that every combination $(e,q,e',\ell_{e'})$ examined
by the algorithm yields a valid V-shape covering $P$, which is
sufficient for our purposes). 

To summarize, inner-outer type V-shapes can be handled in total
time $O(n^2\log n)$.

\paragraph{V-shapes of both-inner type.}
Now a covering V-shape~$V$ has points $a,b$ of $P$ on its inner ray
$X_1$ and points $c,d$ on its inner ray $X_2$; refer to Firgure~\ref{V4}; 
points $a,b,c,d$ are in convex position, in this counterclockwise order.  
\begin{figure}
\centering
\scalebox{0.5}{\input{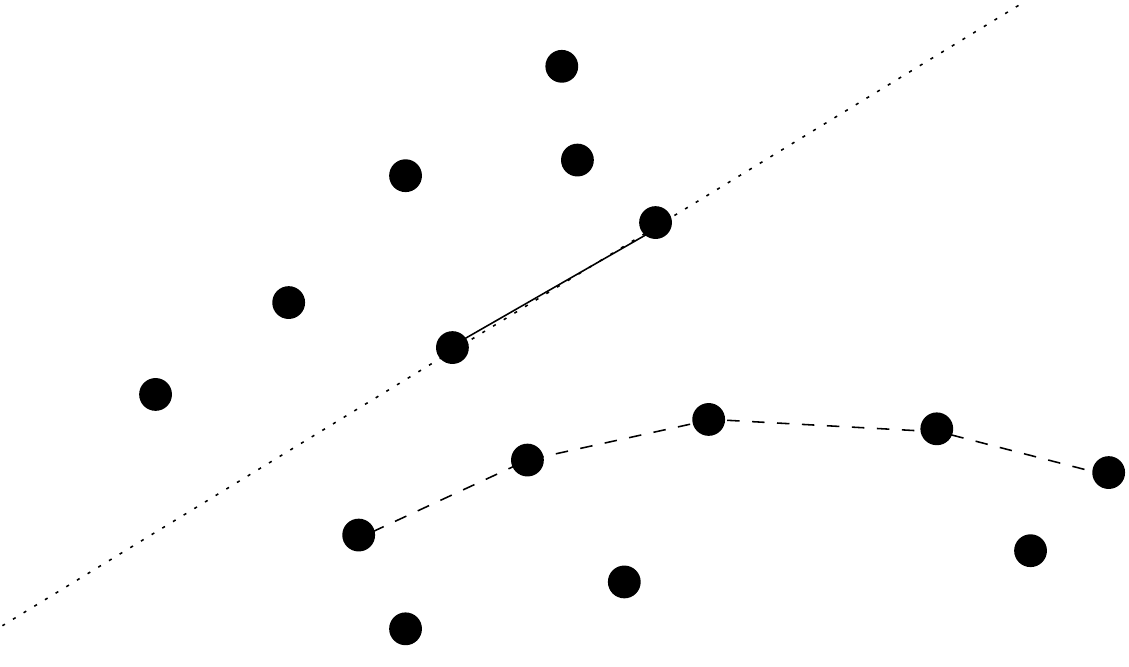_t}}
\caption{Empty wedge defined by $ab$ and $cd$.}
\label{V4}
\end{figure}
It is known~\cite{EmptyPolygons} that there are at most $O(n^2)$ such
\emph{wedges} $W=W(a,b,c,d)$ determined by a quadruple of points
$a,b,c,d \in P$ and empty of points of $P$; note that $W$ completely
determines~$V$, and, given~$W$, one can construct the
corresponding~$V$ in $O(\log n)$ time,
so it is sufficient to enumerate all empty wedges~$W$.



For a pair $a,b \in P$, we compute all pairs $c,d$, so that
$W(a,b,c,d)$ is an empty wedge.  
Let $Q(a,b)$ be the set of all points of $P$ lying to the left of
the directed line $ab$.  

\begin{observation}
  $W(a,b,c,d)$, in the above notation, is an empty wedge if and only
  if line $cd$ supports $\conv(Q)$ and separates segment $ab$
  from $Q=Q(a,b)$ (and $a,b,c,d$ are in this counterclockwise order).
\end{observation}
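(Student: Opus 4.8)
The plan is to reduce emptiness of the wedge to a pair of open-halfplane containments, and then read off the two stated conditions directly. First I would let $x$ denote the intersection of the line through $ab$ and the line through $cd$; these lines are non-parallel by the general-position assumption, and since $a,b,c,d$ are in convex position with $ab$ and $cd$ the two \emph{opposite} edges of the quadrilateral, the lines meet at a single point $x$ lying outside both segments. Hence $a,b$ lie on one ray $r_1$ emanating from $x$ and $c,d$ on another ray $r_2$. The wedge $W=W(a,b,c,d)$ is the closed cone at $x$ bounded by $r_1$ and $r_2$; it is the cone containing the convex quadrilateral $abcd$, so its opening angle is less than $\pi$. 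Consequently $\mathrm{int}(W)=A\cap B$, where $A$ is the open halfplane bounded by the line through $ab$ that the cone lies on, and $B$ is the open halfplane bounded by the line through $cd$ that the cone lies on.

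The crux is to identify $A$ and $B$ from the counterclockwise ordering. Because $a,b,c,d$ appear counterclockwise and $c,d$ are vertices of this convex quadrilateral other than $a,b$, both $c$ and $d$ lie strictly to the left of the directed line $ab$; thus $c,d\in Q=Q(a,b)$, and moving from $r_1$ into the cone proceeds angularly toward $r_2$, i.e., into the left ($Q$) side of the line through $ab$. Therefore $A$ is exactly the open $Q$-side, so $Q\subseteq A$ while every point of $P\setminus Q$ is disjoint from $A$. Symmetrically, moving from $r_2$ into the cone proceeds toward $r_1$, so $B$ is the open side of the line through $cd$ that contains the segment $ab$.

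With these identifications the equivalence is almost immediate, by splitting $P$ into $Q$ and $P\setminus Q$. No point of $P\setminus Q$ lies in $A$, hence none lies in $\mathrm{int}(W)=A\cap B$, so such points are irrelevant to emptiness. Since $Q\subseteq A$, we have $\mathrm{int}(W)\cap Q=B\cap Q$, and therefore $W$ is empty if and only if $B\cap Q=\emptyset$, i.e., if and only if $Q$ lies entirely in the closed halfplane of the line through $cd$ opposite to $ab$. Reading this last condition in two parts yields exactly the claim: ``$Q$ lies on one closed side of the line through $cd$'' says that this line supports $\conv(Q)$ (and it genuinely touches $\conv(Q)$, since $c,d\in Q$ lie on it), while ``that side is the one opposite $ab$'' says precisely that the line separates $ab$ from $Q$. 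Both directions of the iff follow at once, and both conditions are seen to be necessary.

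I expect the main obstacle to be the orientation bookkeeping in the middle step: verifying, from the counterclockwise order alone, that the empty cone lies on the $Q$-side of the line through $ab$ and on the $ab$-side of the line through $cd$, and that the wedge angle is below $\pi$ so that $\mathrm{int}(W)=A\cap B$ holds with no stray components. The remaining content is the clean set-theoretic splitting above. I would also dispatch in passing the harmless boundary situation in which $Q$ is as small as $\{c,d\}$, where ``supports'' and ``separates'' hold trivially and the argument is unaffected.
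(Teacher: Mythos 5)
Your proof is correct. A point worth knowing: the paper itself offers \emph{no} proof of this statement---it is stated as an Observation and used directly to drive the enumeration of empty wedges---so your argument fills in exactly the justification the authors leave implicit. The content you supply is the right one: since $a,b,c,d$ are in convex position in counterclockwise order and lines $ab$, $cd$ are non-parallel (general position), the wedge is the convex cone containing the quadrilateral, so its interior decomposes as $A\cap B$ with $A$ the open halfplane left of the directed line $ab$ (which is precisely the $Q$-side) and $B$ the open side of line $cd$ containing $ab$; splitting $P$ into $Q$ and $P\setminus Q$ then makes emptiness of $W$ equivalent to $Q\cap B=\emptyset$, which is the conjunction ``line $cd$ supports $\conv(Q)$'' (it touches, as $c,d\in Q$) and ``line $cd$ separates $ab$ from $Q$.'' Your orientation bookkeeping (that $c,d$ lie strictly left of directed $ab$ because $ab$ is an edge of a counterclockwise convex quadrilateral) and your handling of the degenerate case $Q=\{c,d\}$ are both sound; one could add, for completeness, that the general-position assumption (no three collinear) is what guarantees no point of $P$ other than $a,b,c,d$ lies on the two boundary rays, so that ``empty interior'' and ``empty wedge'' agree, but this is a triviality.
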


Now enumerating all $k$~pairs $c,d$ for a fixed choice of $a,b$ can be
done in time $O((k+1) \log n)$, as follows.
While handling V-shapes of both-outer type
we constructed a data structure $D(P)$ which, for a given line (here
$ab$), produces a balanced search tree storing the convex hull of the
points of $P$ lying to one side of the line (here $Q=Q(a,b)$).  Using
$D(P)$, we find the point $z$ of $Q$ closest to the line $ab$ and
traverse the boundary of $\conv(Q)$ in both directions from $z$,
to list all $k$ edges $cd$ of $\conv(Q)$ satisfying the conditions
of the above observation.  
Since all such edges are consecutive, it is sufficient to examine
$k+2$ edges of $\conv(Q)$.
Repeating the procedure for all choices of $a,b$ and recalling that
the number of empty wedges is at most quadratic, we deduce that the
enumeration algorithm runs in time $O(n^2\log n)$.


\section{Maximum number of canonical minimum-width V-shapes}
\label{sec:lower-bound}

How far is our algorithm from optimality?  In Figure~\ref{fig:lower},
starting with the vertex set of two congruent regular $k$-gons, for a
suitably large~$k$, we sketch a construction of a set of $n$ points
with $\Theta(n^2)$ distinct covering minimum-width V-shapes.  The idea
is that a minimum-width covering V-shape would consist essentially of
two independently chosen minimum-width strips, each covering one
$k$-gon.  The point set is highly degenerate.  However, perturbing it
slightly yields a point set with $\Theta(n^2)$ canonical V-shapes with
width arbitrarily close to minimum possible.  This is an indication
that any algorithm that explicitly enumerates canonical covering
V-shapes may have to spend $\Omega(n^2)$ time on this input, thus it
is unlikely that any algorithm taking our approach can run much
faster.  On the other hand, for this specific input one can encode the
$\Theta(n^2)$ optimal V-shapes in $\Theta(n)$ space.  This leaves open
the possibility that identifying a single minimum-width covering
V-shape may still be possible in subquadratic worst-case time.

\begin{figure}
  \centering
  \includegraphics[scale=0.8]{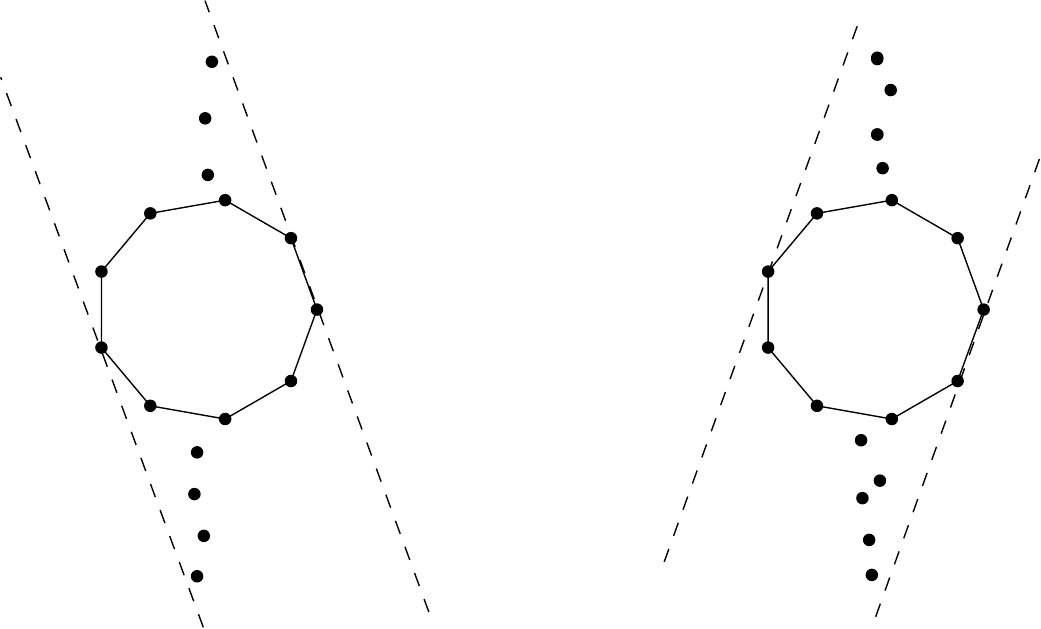}
  \caption{A sketch of a construction of a set with many minimum-width
    covering V-shapes.  Points below and above the two regular
    $k$-gons are added to raise the minimum width of any covering
    V-shape to that of the $k$-gons.}
  \label{fig:lower}
\end{figure}

\section{A 13-approximation algorithm}
\label{sec:13-approx}

Given a set of points $P$, let $w$ be the minimum value such that~$P$
can be covered by a V-shape of width~$w$.
We present an algorithm that computes a V-shape covering $P$ of width at most $13 w$ in time $O(n \log n)$.
For this purpose, we use the $O(n \log n)$ time 6-approximation algorithm for the 2-line-center problem  presented by Agarwal, Procopiuc, and Varadarajan \cite{2-line-center}.
Recall that the 2-line-center problem is the following: Given a set $P$ of $n$
points in $\bbbr^2$, cover $P$ by two congruent strips of minimum
width.  We start with the following observation which follows from the
fact that the union of the two strips of any V-shape covering $P$ contains $P$.

\begin{observation}
  \label{Obs:width}
  If $w'$ is the width of two congruent strips of minimum width covering $P$, $w' \leq w$.
\end{observation}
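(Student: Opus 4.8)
The plan is to turn a minimum-width V-shape covering $P$ directly into a feasible solution of the 2-line-center problem of width~$w$, so that the optimum $w'$ of the latter is bounded above by~$w$. This is exactly the idea hinted at in the sentence preceding the statement: a covering V-shape carries with it two covering strips.

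First I would fix a V-shape $V$ covering $P$ with $\width(V)=w$, and take $S_1,S_2$ to be its left and right strips, as defined above. Since $V \subseteq S_1 \cup S_2$ and $P \subseteq V$, we get $P \subseteq S_1 \cup S_2$; that is, the two strips of $V$ by themselves already cover~$P$. Moreover, by definition $\width(V)=\max(\width(S_1),\width(S_2))=w$, so each of $S_1$ and $S_2$ has width at most~$w$.

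Next I would make the two strips congruent without losing coverage. Whichever of $S_1,S_2$ is narrower, I widen it (by translating one of its two bounding lines outward) until both strips have width exactly~$w$. Widening a strip only enlarges the region it occupies, so the enlarged strips $S_1' \supseteq S_1$ and $S_2' \supseteq S_2$ still satisfy $P \subseteq S_1' \cup S_2'$, and now $\width(S_1')=\width(S_2')=w$. Thus $S_1',S_2'$ is a pair of congruent strips of width~$w$ covering~$P$, i.e.\ a feasible solution of the 2-line-center problem. Since $w'$ is the minimum width over all such pairs, the existence of this particular pair of width~$w$ forces $w' \le w$, as claimed.

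There is essentially no hard step here: the statement is an immediate consequence of the fact that every covering V-shape supplies two covering strips whose widths are at most $\width(V)$. The only point that requires a moment's care is that the two strips of a V-shape need not be congruent, but this is dealt with by the harmless widening step above, which preserves coverage while equalizing the two widths.
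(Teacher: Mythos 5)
Your proof is correct and follows exactly the paper's (one-line) justification: the two strips of a minimum-width covering V-shape already cover $P$, so they witness feasibility for the 2-line-center problem at width $w$. The widening step you add to make the strips congruent is a harmless detail the paper leaves implicit.
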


Our 13-approximation algorithm proceeds as follows.  Use the 6-approximation
algorithm of \cite{2-line-center} to compute two congruent
strips of width $w''$ that cover $P$, with $w' \leq w'' \leq 6w'$.
(It is possible that the two strips just computed are such
that a V-shape defined by them contains $P$.  In this case we return
that V-shape.  This clearly produces a 6-approximation, due to
Observation~\ref{Obs:width}.  In the remainder of this section, we
will assume that this is not the case, in other words, one of the two
strips has points of $P$ on both sides of it.)
Find the median lines $\ell_1$ and $\ell_2$ of
the strips.  For all points in each strip, project them orthogonally
onto $\ell_1$
and $\ell_2$ respectively (the points in the intersection of the
strips are duplicated and projected onto both $\ell_1$ and $\ell_2$).
Let $P'$ be the resulting set of projected points.  Compute an exact
minimum width V-shape $V'$ covering $P'$ (see Section~\ref{sec:two-lines})
in $O(n \log n )$ time.  The desired approximate V-shape $V$ is
obtained by widening $V'$ by $w''/2$ in all directions.

\begin{theorem}
This algorithm computes a 13-approximation of a minimum-width V-shape covering~$P$.
\end{theorem}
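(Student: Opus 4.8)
The plan is to track the width through the two projection-and-widening steps, and to establish separately that $V$ covers $P$ and that $\width(V)\le 13w$. First I would collect the inequalities relating the various widths. Let $w$ be the optimal V-shape width, $w'$ the optimal 2-line-center width, and $w''$ the width returned by the 6-approximation. By Observation~\ref{Obs:width} we have $w'\le w$, and by the guarantee of \cite{2-line-center} we have $w'\le w''\le 6w'$; chaining these yields $w''\le 6w$. The one geometric fact I will use repeatedly is that orthogonally projecting a point lying inside a strip of width $w''$ onto the strip's median line displaces it by at most $w''/2$. Hence each point of $P'$ lies within $w''/2$ of its preimage in $P$, and conversely (since the two strips cover $P$) each point of $P$ lies within $w''/2$ of at least one of its images in $P'$.

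Next I would bound $\width(V')$ by exhibiting a single V-shape of width at most $w+w''$ covering $P'$, which forces $\width(V')\le w+w''$ because $V'$ is a minimum-width covering V-shape of $P'$. The natural candidate is an optimal V-shape $V_{\mathrm{opt}}$ for $P$ (width $w$) widened by $w''/2$ in all directions: since every point of $P'$ lies within $w''/2$ of $P\subseteq V_{\mathrm{opt}}$, the widened shape covers $P'$, while widening increases each arm's width by $w''$.

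I would then combine these estimates. The output $V$ is $V'$ widened by $w''/2$, so $\width(V)=\width(V')+w''\le w+2w''$. For the covering property, each $p\in P$ lies within $w''/2$ of some $p'\in P'\subseteq V'$, and therefore inside the widened shape $V$, giving $P\subseteq V$. Substituting $w''\le 6w$ finally yields $\width(V)\le w+2w''\le w+12w=13w$, as claimed.

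The hard part will be making the ``widen a V-shape by $\delta$'' operation rigorous, since this is where both the width accounting and the covering argument rest. I must verify that pushing all four bounding rays outward by $\delta$ (i) produces a genuine V-shape whose width grows by exactly $2\delta$ and not more, and (ii) in fact contains the entire $\delta$-neighborhood of the original region. The delicate regions are the convex outer apex $y$ and the reflex inner apex $x$, where a naive ray-pushing could fail to capture points lying just beyond a sharply-angled tip. I would resolve this either by taking the Minkowski sum of the V-shape with a disk of radius $\delta$ (which covers the $\delta$-neighborhood by definition) and checking that this rounded region is contained in a V-shape whose arm widths are increased by $2\delta$, or by arguing directly that the only displaced points we need to capture are each obtained from a point of $V'$ by a perpendicular move of length at most $\delta$, and hence land inside the pushed-out shape.
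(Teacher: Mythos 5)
Your proposal is correct and follows essentially the same route as the paper's proof: the identical chain $\width(\Vapp)\le\width(V')+w''\le\width(\Vbest)+2w''\le w+12w'\le 13w$, with the bound on $\width(V')$ coming from the fact that projection onto the median lines moves each point by at most $w''/2$. The only difference is explicitness---you justify $\width(V')\le\width(\Vbest)+w''$ by exhibiting the widened optimal V-shape as a covering candidate for $P'$, and you flag the ``widening by $\delta$ contains the $\delta$-neighborhood'' fact as requiring proof, a step the paper simply asserts in the sentence ``Since $\Vapp$ is a widened version of $V'$, it contains the points of $P$.''
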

\begin{proof}
  Let $\Vbest$ be a minimum-width covering V-shape of $P$, $V'$ --- a
  minimum-width covering V-shape of $P'$, and $\Vapp$ --- the approximate
  covering V-shape computed by the algorithm.
  As the points of $P$ have been moved by a distance
  of at most $w''/2$ to form $P'$, $\width(V') \leq \width(\Vbest) + w''$.
  Since $\Vapp$ is a widened version of $V'$, it contains the points
  of $P$.  Moreover, $\width(\Vapp) \leq \width(V') + w'' \leq
  \width(\Vbest) + 2 w'' \leq w +12 w' \leq 13 w$ by
  Observation~\ref{Obs:width}. \hfill $\Box$
\end{proof}

\begin{uremark}
  Using the $(1+\eps)$-approximation
  algorithm of \cite{2-line-center} in place of their 6-approximation
  algorithm in our procedure, we can attain any approximation factor
  larger than 
  three for the minimum-width V-shape.  The running time remains $O(n
  \log n)$, with the constant of proportionality depending on the
  quality of the approximation.  We do not discuss this extension further,
  since we present our own $(1+\eps)$-approximation algorithm for the
  problem in Section~\ref{sec:one+eps}.
\end{uremark}

\subsection{Minimum-width V-shape for points on two lines}
\label{sec:two-lines}

We now describe how to compute the minimum-width V-shape~$V'$ covering
a given point set~$P'$ contained in the union of two lines $\ell_1$, $\ell_2$
in the plane.  Put $z:=\ell_1 \cap \ell_2$.

Let $P_1':=P' \cap \ell_1$ and $P_2':=P'\cap \ell_2$.  If $P'$ is
covered by a zero-width $V$-shape, which is easy to check, we are
done.  From now on we assume that this is not the case, i.e.,
that $\ell_1$, $\ell_2$ do not already form a V-shape
containing $P'$, so $\ell_1$ separates some two points of $P_2'$ and/or $\ell_2$
separates some two points of $P_1'$.  The convex hull $\conv(P')$ has
three or four vertices.  Moreover, by reasoning similar to that of
Section~\ref{sec:algorithm}, the outer boundary of $V'$ 
contains two, three, or four vertices of $\conv(P')$ (in the case where 
an outer ray is contained in $\ell_1$ or $\ell_2$, we consider only
the extreme points).  Before describing
how we handle these cases, we need a technical lemma.

\begin{lemma}
  \label{lem:cut}
  Given a line partitioning $P'$ into $P_r'$, $P_\ell'$ and given
  their convex hulls $\conv(P_r')$, $\conv(P_\ell')$, the minimum-width
  canonical V-shape~$V'$ of $P'$ containing $P_r'$ in one strip and $P_\ell'$ in
  the other can be computed in constant time; some points
  of $P'$ might lie in both strips of $V'$.
\end{lemma}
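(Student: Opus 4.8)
The plan is to exploit the fact that, since $P'$ lies in the union of the two lines $\ell_1$ and $\ell_2$, every subset of $P'$ has a convex hull with at most four vertices: at most the two extreme points of $P'$ on $\ell_1$ and the two on $\ell_2$. In particular $\conv(P_r')$ and $\conv(P_\ell')$ each have $O(1)$ vertices and $O(1)$ edges, so the entire optimization lives on two constant-complexity polygons. The strips of the desired V-shape must satisfy $\conv(P_\ell')\subseteq S_1$ and $\conv(P_r')\subseteq S_2$ (up to swapping the two labels, one further $O(1)$ choice), and I would reduce computing $V'$ to examining a constant number of combinatorial configurations, each resolved in $O(1)$ time.

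First I would argue that in a minimum-width canonical V-shape each strip is \emph{tight}: both boundary lines of $S_1$ support $\conv(P_\ell')$, and likewise for $S_2$ --- otherwise a boundary line could be translated inward, narrowing the strip without losing coverage and contradicting minimality (the degenerate case of a zero-width arm, where the strip collapses onto the extreme points on $\ell_1$ or $\ell_2$, is handled separately, exactly as flagged in the statement). A tight strip around a constant-complexity hull has, in each orientation, both boundaries pinned to a hull vertex or edge, so its width as a function of orientation is a piecewise-sinusoidal function with $O(1)$ breakpoints. Since $\width(V')=\max(\width(S_1),\width(S_2))$ and the two strips interact only through the requirement that their inner boundary lines meet at a common apex $x$ and their outer boundary lines meet at $y$, the two orientations decouple in the objective itself: $\width(V')$ is bounded below by the larger of the two minimum strip widths, and the optimum is attained either where each strip is flush with a hull edge (the unconstrained minimizer of each directional-width function) or at a breakpoint forced by a validity constraint. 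By canonicity (Lemma~\ref{CanonicalV-shape}) the bounding rays of each arm pass through three points of $P'$, which pins the four boundary lines to $O(1)$ incidence patterns with the $O(1)$ hull vertices.

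For each such pattern I would write down the $O(1)$ defining constraints --- the two parallelism relations $L(X_1)\parallel L(Y_1)$, $L(X_2)\parallel L(Y_2)$, the incidences of the boundary lines with the prescribed hull vertices or edges, and the requirement that the inner (resp.\ outer) lines pass through a common point $x$ (resp.\ $y$) --- obtaining a constant-size system whose solution yields the candidate directions and apexes $x,y$ in $O(1)$ time. I would then discard any candidate that is not a valid covering V-shape: it must respect the given left/right arm assignment (consistent with the angle bisector, as in Section~\ref{sec:algorithm}), its wedge region must actually contain $P'$ (a check that $x,y$ are placed so the rays open in the correct direction), and it must satisfy the acuteness condition of Corollary~\ref{Cor:angle}. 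Returning the minimum-width valid candidate over all $O(1)$ patterns gives $V'$ in constant time. I expect the main obstacle to be the bookkeeping of the enumeration: establishing that the short list of incidence patterns is genuinely exhaustive --- including the degenerate cases where a boundary ray runs along $\ell_1$ or $\ell_2$ (so ``extreme point'' replaces ``edge''), where a hull has three rather than four vertices, or where an arm has zero width --- and confirming that in each case the optimum is attained at one of the enumerated flush configurations rather than in the interior of the orientation space, which requires the standard ``rotate or translate to narrow'' perturbation argument to push the optimum onto the combinatorial boundary.
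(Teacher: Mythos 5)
Your high-level plan---both hulls have at most four vertices, so the optimum can be found by solving and checking a constant number of ``pinned'' configurations---is the same skeleton as the paper's proof, which enumerates a constant pool of candidate directions, builds for each pair the tight strips $S(P_r',e_r)$ and $S(P_\ell',e_\ell)$ parallel to them, and keeps the narrowest pair that actually forms a covering V-shape. The difference, and the genuine gap, lies in how you argue that the candidate pool is exhaustive.

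Your decoupling claim is wrong as stated: the requirement that the inner (resp.\ outer) boundary lines meet at a common apex is vacuous, since \emph{any} two crossing strips automatically determine apexes $x$ and $y$. The real interaction between the strips is the coverage constraint. A V-shape is not the union of its two strips; it is the union of two half-strips (arms) separated by the line $xy$, so a point of $P_r'$ lying on the $S_1$ side of that line must be covered by the arm of $S_1$---this is exactly why the statement warns that some points of $P'$ may lie in both strips of $V'$. This coupling undermines both of your key steps. The tightness argument fails because translating a boundary line of $S_1$ inward toward $\conv(P_\ell')$ moves $x$, $y$, and the line $xy$, and can thereby expose a point of $P_r'$ that was covered only by the arm of $S_1$; ``narrowing without losing coverage'' is unjustified. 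The exhaustiveness argument fails for the same reason: at the optimum, the binding constraint can pin a bounding line of one arm against a point of the \emph{other} set, and such a configuration is neither flush with an edge of that arm's own hull nor a breakpoint of that hull's directional-width function, so it need not appear in the candidate list your argument generates. The paper's proof deals with precisely this case by enlarging the pool: each arm's boundary is required to contain an edge of its own hull \emph{or an outer common tangent of $\conv(P_r')$ and $\conv(P_\ell')$}, and every candidate pair is then explicitly tested for forming a covering V-shape. Unless you prove that your incidence patterns subsume these mixed, common-tangent configurations (and that a pinning point belonging to the other set can always be taken to be a vertex of its own hull), your enumeration is not provably exhaustive, and the lemma does not follow from your argument.
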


\begin{proof}
  The convex hulls $\conv(P_r')$ and $\conv(P_\ell')$ have at most four
  vertices each. It must be the case that the
  boundary of one arm of $V'$ contains an edge $e_r$ of $\conv(P_r')$ or
  an outer common tangent of $\conv(P_r')$ and $\conv(P_\ell')$, and the
  other arm boundary contains an edge $e_\ell$ of $\conv(P_\ell')$ or an
  outer common tangent to $\conv(P_r')$ and $\conv(P_\ell')$.  There is
  a constant number of possible pairs of such edges.  Let $S(X,e)$ be
  the minimum-width strip covering a set $X$ and parallel to $e$.  For
  each such pair of edges $e_r,e_\ell$, check whether $S(P_r',e_r)$
  and $S(P_\ell',e_\ell)$ form a V-shape.  Return the canonical V-shape of
  minimum width among all V-shapes so generated. \hfill $\Box$
\end{proof}

Now we consider the different types of canonical V-shapes
covering~$P'$ and describe how to find a minimum-width V-shape of each
type.

\emph{Case 1: An outer bounding ray of $V'$ contains an edge $e$ of $\conv(P')$.}
Let $\ell$ be the line containing $e$.  For all points $p$ of $P'$,
draw a line $\ell_p$ through $p$ and parallel to~$\ell$.  Apply
Lemma~\ref{lem:cut} to (the partition induced by) $\ell_p$.  This can be
implemented to run in overall time $O(n\log n)$.

\smallskip

In the remaining cases, each of the outer rays of $V'$ contains
precisely one vertex of $\conv(P')$ and each inner ray contains two
points of $P'$.

\emph{Case 2: An inner ray of $V'$ lies on $\ell_1$ or $\ell_2$.}
Suppose an inner ray of $V'$ is contained in $\ell_1$.
Draw two lines parallel to $\ell_1$ and very close to it,
one to the left of $\ell_1$, one to the right of $\ell_1$.
Apply Lemma~\ref{lem:cut} to each of these two lines.

\emph{Case 3: Point $z=\ell_1\cap\ell_2$ lies between the two arms of $V'$.}
Draw the two lines passing through $z$ and bisecting the angles
between $\ell_1$ and $\ell_2$.  
Apply Lemma~\ref{lem:cut} to each of these two
lines.

\emph{Case 4: Point $z$ is inside one arm of $V'$.}
For each pair of consecutive points $p,q \in P'$ on $\ell_1$ or
on $\ell_2$ not separated by $z$, apply Lemma~\ref{lem:cut} to the
perpendicular bisector of the segment $pq$.

Now we argue that the last procedure returns the best minimum-width
V-shape $V'$ of $P'$ with two points on its outer boundary and $z$ in
one of its arms, correctly handling case~4 and thereby concluding our description.

Let $x$ and $y$ be the vertices of $V'$.  For ease of presentation,
rotate the entire picture so that $y$ lies below $x$; refer to figure~\ref{V8}.
Let $s_3$ be the point of $P$ on $Y_1$, $s_1$, $s_2$ be the points on $X_1$, with $s_1$ closer to $x$ than $s_2$.
Similarly let $s_4$ be the point on $Y_2$, $s_5$, $s_6$ be the points on $X_1$, with $s_5$ closer to $x$ than $s_6$. As $\ell_1$ and $\ell_2$ don't intersect between the two arms of $V'$, $s_1$ and $s_5$ lie on one line, $s_2$ and $s_6$ lie on the other line.
Let $s_1$ and $s_5$ lie on $\ell_1$, and $s_2$ and $s_6$ lie on
$\ell_2$, without loss of generality.

\begin{figure}
\centering
\subfloat[Three boundary points on $\ell_1$,$\ell_2$.]{\label{V8} \scalebox{0.4}{\input{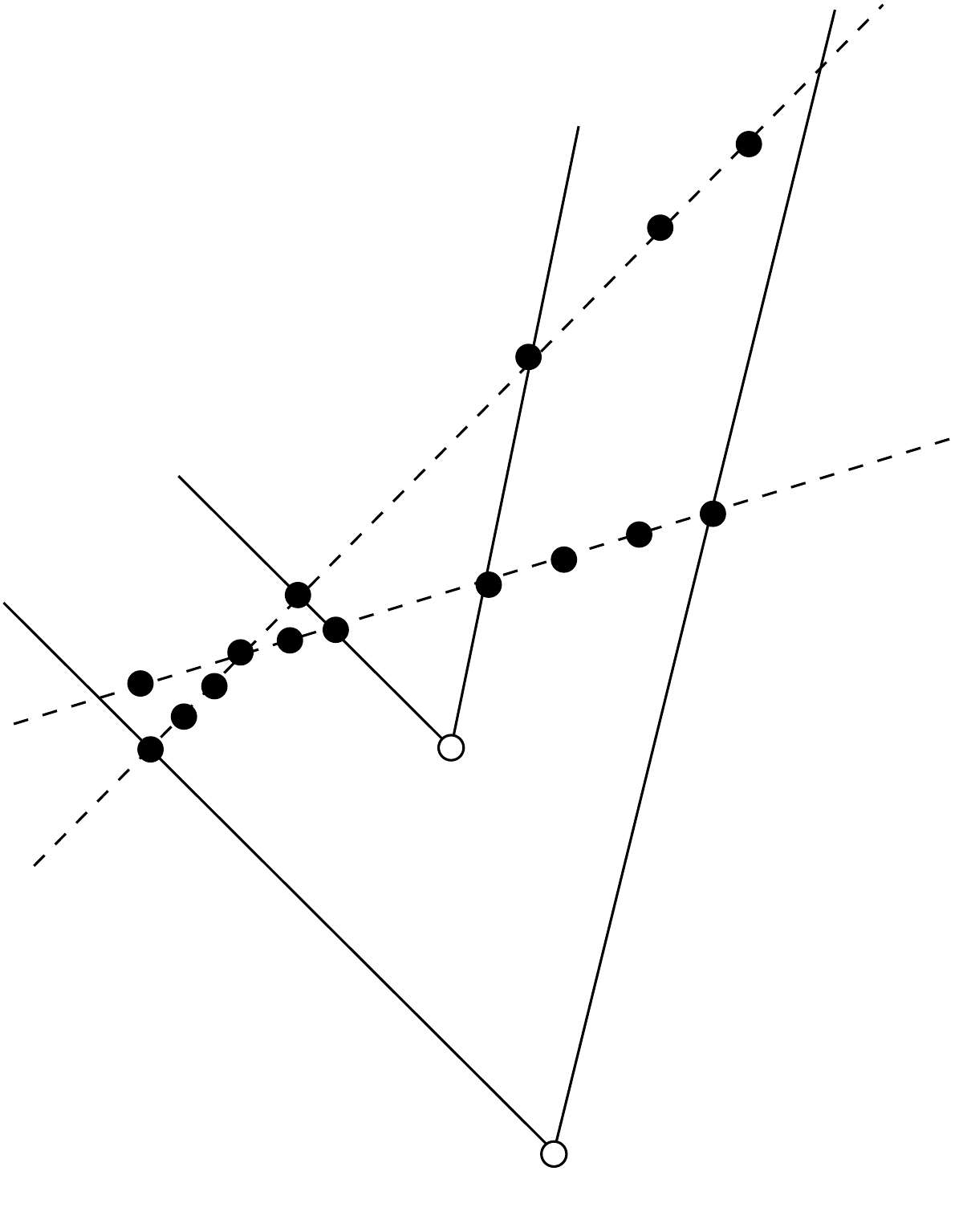_t}}}
\subfloat[ Four boundary points on $\ell_2$]{\label{V9} \scalebox{0.4}{\input{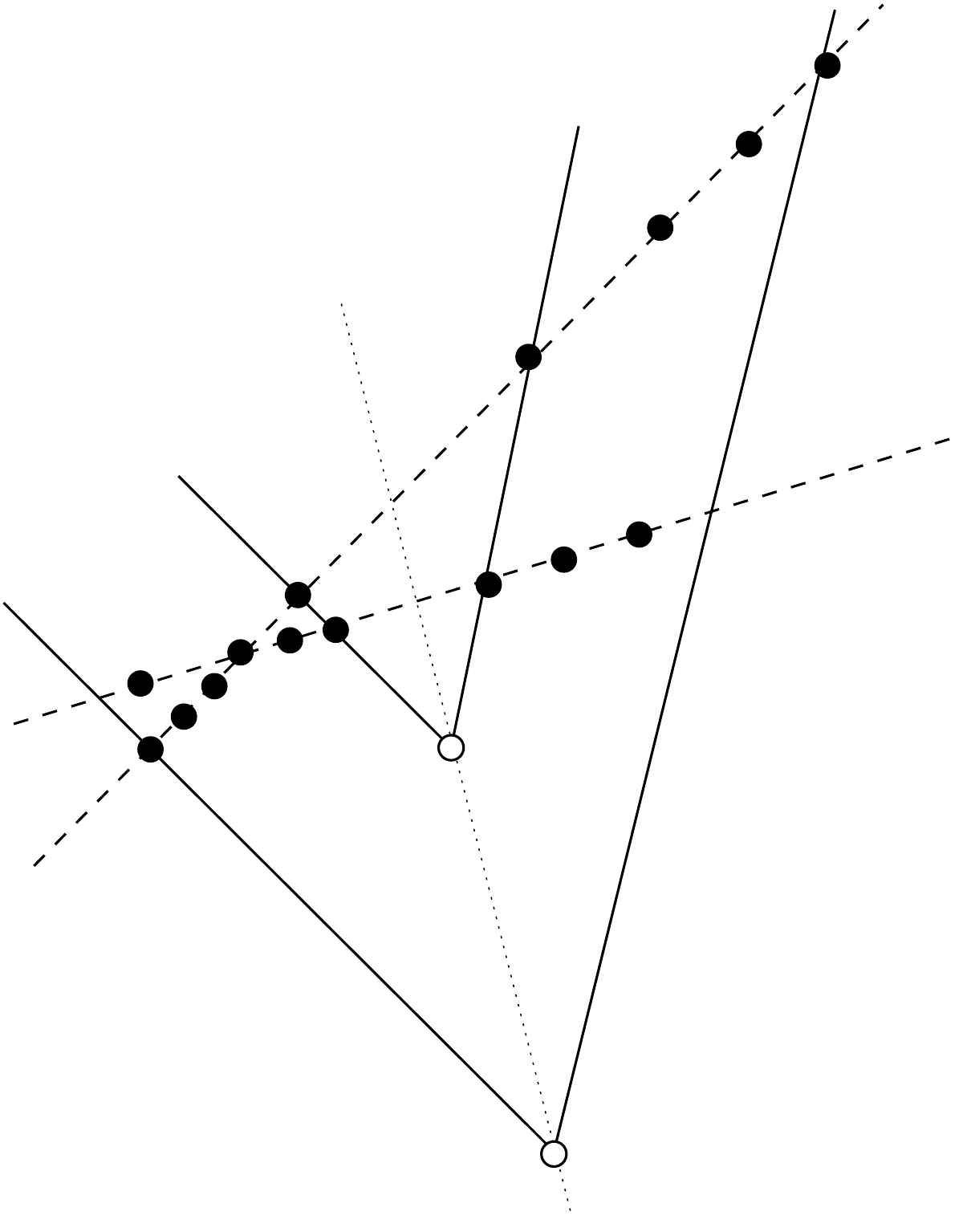_t}}}
\newline
\subfloat[Four boundary points on $\ell_1$.]{\label{V10} \scalebox{0.4}{\input{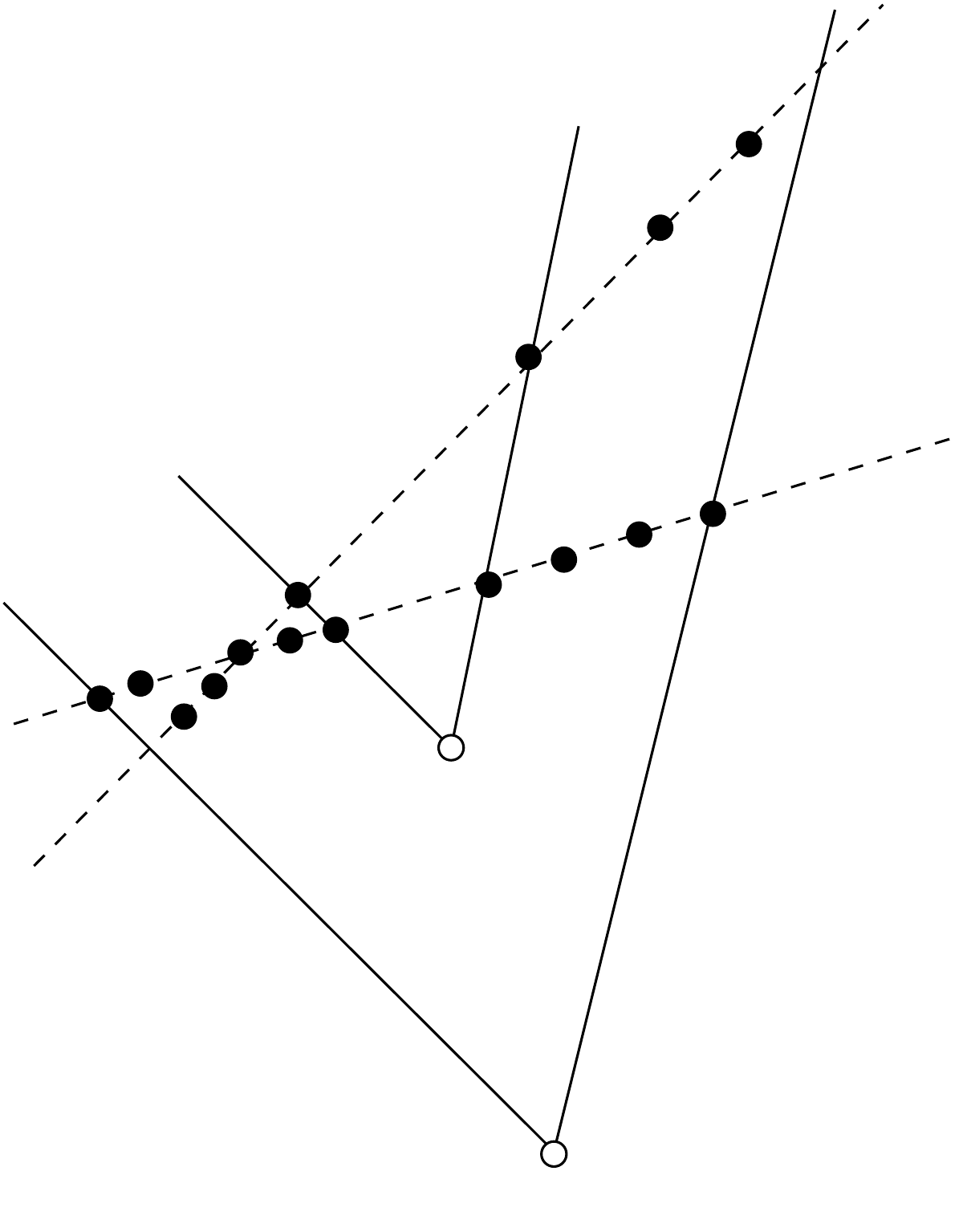_t}}}
\caption{$V'$}
\end{figure}

The three points of $P'$ on one arm boundary cannot lie on the same
line $\ell_i$, as they form a triangle.  Therefore either each line $\ell_i$ contains
three boundary points belonging to three different boundary rays, or one
line contains four boundary points from four boundary rays, and the
other line contains two boundary points from the two inner rays.

We consider all possible cases: 
\begin{enumerate} \itemsep 0pt \parsep 0pt \parskip 0pt
\item $\ell_1$ contains $s_1$, $s_5$, and $s_4$, and $\ell_2$ contains $s_3$, $s_2$, and $s_6$ ($\ell_1$ contains $s_1$, $s_3$, and $s_5$, and $\ell_2$ contains $s_2$, $s_4$, and $s_6$ is a symmetric case).
\item $\ell_2$ contains four boundary points, $\ell_1$ contains two boundary points.
\item $\ell_2$ contains two boundary points, $\ell_1$ contains four boundary points.
\end{enumerate}
In each case, we prove either that the configuration
of the points is impossible, or that the angles
$\alpha_1$ and $\alpha_2$ (see
figure~\ref{V8}) between $\ell_1$ and the inner boundary rays are
acute.  When the angles are acute, the perpendicular bisector of $s_1 s_5$
separates the points of $P'$ belonging to each arm of $V'$.  Therefore
Lemma~\ref{lem:cut} can be applied and our handling of case~4 is justified.

We consider the cases in turn:
\begin{enumerate} \itemsep 0pt \parsep 0pt \parskip 0pt
\item $\ell_1$ contains $s_1$, $s_5$, and $s_4$, and $\ell_2$ contains
  $s_3$, $s_2$, and $s_6$
(see figure~\ref{V8}).
As the angle $\angle s_4 s_5 s_6$ is acute, its opposite angle $\alpha_2$ is acute. 
What is left to prove is that $\alpha_1$ is acute as well.
The angle $\angle s_1 s_2 s_3$ is acute, hence $\alpha_5$ is obtuse,
and so is $\alpha_6$.
But the angle $\angle s_5 s_6 s_4$ is acute, hence $\ell_2$ does not intersect $\ell_1$ in the right arm of $V'$, so $\ell_1$ and $\ell_2$ intersect in the left arm.
More precisely $\ell_1$ intersects the segment $s_3 s_2$.
The opposite angle of $\alpha_1$ is smaller than the angle $\angle s_2 s_1 s_3$, which is acute.
Therefore 
so is $\alpha_1$.
\item $\ell_2$ contains four boundary points, $\ell_1$ contains two boundary points.
Let $\ell_{s_2 s_3}$ be the line containing $s_2, s_3$, $\ell_{s_6 s_4}$ be the line containing $s_6,s_4$, and $\ell_{xy}$ be the line containing $x,y$ (see figure~\ref{V9}).
By Corollary~\ref{Cor:angle}, as the angle $\angle s_3 s_2 s_1$ is acute, $\ell_{s_2 s_3}$ forms an acute angle $\alpha_3$ with $\ell_{xy}$.
Similarly, as the angle $\angle s_5 s_6 s_4$ is acute, $\ell_{s_6 s_4}$ forms an acute angle $\alpha_4$ with $\ell_{xy}$.
But as $\ell_{s_3 s_2} = \ell_{s_6 s_4}$, $\alpha_3$ and $\alpha_4$ are supplementary, a contradiction.
\item $\ell_2$ contains two boundary points, $\ell_1$ contains four boundary points (see figure~\ref{V10}).
By Corollary~\ref{Cor:angle}, the angles $\angle s_2 s_1 s_3$ and $\angle s_4 s_5 s_6$ are acute, therefore their opposite angles $\alpha_1$ and $\alpha_2$ are acute as well.
\end{enumerate}


\section{A $(1+\eps)$-approximation algorithm}
\label{sec:one+eps}

In this section we describe how to construct, given a point set $P$
and a real number $\eps>0$, a V-shape $V$ covering~$P$, with
$\width(V)\leq(1+\eps)\wopt$, where $\wopt$ is the width of a
minimum-width V-shape covering $P$.

We start by recalling the notion of an anchor pair used in
\cite{2-line-center}.  Given a V-shape~$V$ covering $P$, fix one of
the strips of $V$, say $S_1$.  We say that a pair of points $p,q \in P
\cap S_1$ is an \emph{anchor pair}, if $\dist(p,q) \geq \mathrm{diam}(P \cap
S_1)/2$.  Lemma~3.3 in~\cite{2-line-center} describes how to identify
at most~11 pairs of points in~$P$, such that, for any two-strip cover
of~$P$, at least one of the pairs is an anchor pair for one of the
strips; the algorithm requires $O(n \log n)$ time.  As covering by a
V-shape is a special case of covering by two strips, the definition
and the algorithm apply here as well.

We show how to, given a potential anchor pair~$p,q$, construct a
$(1+\eps)$-approximation of the minimum-width V-shape covering $P$ for
which $p,q$ is an anchor pair.  More precisely, below we prove
\begin{lemma}
  \label{lem:fixed-anchor-pair}
  Given a potential anchor pair $p,q \in P$, we can construct, in time
  $O((n/\eps)\log n + (n/\eps^{3/2})\log^2(1/\eps))$, a V-shape
  covering $P$, of width at most $1+\eps$ times the minimum width of
  any V-shape covering $P$ for which $p,q$ is an anchor pair.
\end{lemma}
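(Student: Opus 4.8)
The plan is to fix the anchor pair $p,q$ and exploit the fact that, once we know which strip---say $S_1$---contains $p,q$ as an anchor pair, the diameter of $P\cap S_1$ is comparable to $\dist(p,q)$, so the whole strip lives in a bounded region near the segment $pq$. Since $p,q$ is an anchor pair for $S_1$, we have $\dist(p,q)\geq \mathrm{diam}(P\cap S_1)/2$, which pins down both the location and (up to a constant) the extent of $S_1$. The direction of $S_1$ is not yet known, but it is nearly aligned with $pq$: because both $p$ and $q$ lie in a strip of width $w=\width(S_1)$, the line $pq$ makes an angle of at most $O(w/\dist(p,q))$ with the strip's median. This is the key geometric handle: rather than searching over all $O(n)$ or $O(n^2)$ combinatorial directions, I would discretize the direction of the strip into $O(1/\sqrt{\eps})$ orientations and, for each, solve a restricted covering problem.

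The main steps I would carry out are the following. First, guess a constant number of gross configurations: which of the two strips $p,q$ anchors, and a coarse orientation for that strip. Second, for each candidate orientation $\theta$ of $S_1$, determine the set $Q$ of points forced into $S_1$ and the complementary set $Q'$ forced into the other strip $S_2$; then one arm is essentially fixed (up to refinement) and the remaining problem is to cover $Q'$ by $S_2$ so that the two strips form a legal V-shape of minimum width. Third, to control the error introduced by discretizing the orientation, I would argue that rotating $S_1$ by an angle $\delta$ changes its required width by only $O(\delta\cdot\mathrm{diam}(P\cap S_1)) = O(\delta\cdot\dist(p,q))$; choosing the angular grid spacing $\delta=\Theta(\sqrt{\eps})$ (or $\Theta(\eps)$, depending on the lever-arm analysis) keeps the additive width error below $\eps\cdot\wopt$, since $\wopt\geq w'\geq \dist(p,q)\cdot\Omega(\delta)$ in the relevant regime. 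This is exactly the $\eps^{-1/2}$ and $\eps^{-3/2}$ dependence seen in the claimed running time, mirroring the approximate-orientation technique of \cite{2-line-center}. Finally, for each of the $O(1/\sqrt{\eps})$ orientations I would solve the residual one-sided covering problem using the data structures already developed in Section~\ref{sec:algorithm} (extreme-point queries on $P$ restricted to a halfplane, answered in $O(\log n)$ time), yielding $O((n/\sqrt{\eps})\log n)$ per anchor pair for the direction sweep, plus the finer $O((n/\eps^{3/2})\log^2(1/\eps))$ term from a second, nested refinement of the second strip's orientation.

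The hard part, I expect, will be the error analysis that ties the angular discretization to the multiplicative $(1+\eps)$ guarantee. One must show that the best V-shape whose $S_1$-orientation is snapped to the nearest grid direction has width at most $(1+\eps)$ times the optimum \emph{for this anchor pair}, and this requires carefully distinguishing the case where the optimal width is large (where a coarse $\Theta(\sqrt{\eps})$ grid suffices because the lever arm $\dist(p,q)$ is only a constant factor larger than $\wopt$) from the case where it is small or where the second strip contributes the binding width. A secondary subtlety is that fixing the orientation of $S_1$ does not by itself fix the orientation of $S_2$; the two strips must meet to form a genuine V-shape, so after snapping $S_1$ I must still optimize over the orientation of $S_2$, and it is the \emph{product} of the two angular grids that produces the $\eps^{-3/2}$ term. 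Ensuring that the partition of $P$ into the two arms remains consistent as both orientations vary---so that the residual problems stay monotone and amenable to the persistent convex-hull machinery---is where the bookkeeping will be most delicate.

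\begin{proof}[Proof sketch]
We defer the full argument to the detailed construction below, organized along the plan outlined above.
\end{proof}
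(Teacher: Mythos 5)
Your opening moves match the paper's: fix the anchor pair, use the lever-arm bound $\dist(p,q)\geq\mathrm{diam}(P\cap S_1)/2$ to argue that $S_1$ is nearly aligned with $pq$, and replace the continuum of directions for $S_1$ by a discrete grid (this is exactly Lemma~\ref{lem:rotate-strip} in the paper). But two of your quantitative/structural choices break down. First, your own lever-arm analysis forces a grid of size $O(1/\eps)$, not $O(1/\sqrt{\eps})$: the spacing must be $\beta=\Theta(\eps\,\width(V)/\dist(p,q))$ to keep the additive error at $O(\eps\cdot\width(V))$, while the admissible angular range around $pq$ is $\gamma=\Theta(\width(V)/\dist(p,q))$, so the count is $\gamma/\beta=\Theta(1/\eps)$. (The paper also uses the 13-approximation $\wapp$ from Section~\ref{sec:13-approx} to get computable, conservative stand-ins for $\width(V)$ in $\beta$ and $\gamma$ --- a bootstrapping step your sketch omits.)

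Second, and more fundamentally, your plan to handle the \emph{second} strip by a nested angular grid cannot work: the anchor-pair guarantee applies only to $S_1$, so there is no bound on $\mathrm{diam}(Q')$ in terms of $\wopt$, and snapping the direction of $S_2$ to an $\eps$-dependent grid incurs additive error $\Theta(\delta\cdot\mathrm{diam}(Q'))$, which can be arbitrarily large relative to $\wopt$. This is precisely why the paper does \emph{not} discretize $S_2$ at all. Instead (Lemma~\ref{lem:apx-one-direction}), for each of the $O(1/\eps)$ directions of $S_1$ it sweeps the partition point $q$ by decreasing distance from a supporting line and maintains Chan's streaming \emph{coreset}~\cite{chan04} of $Q'$ --- size $O(1/\eps^{1/2})$, amortized $O((1/\eps^{1/2})\log^2(1/\eps))$ per insertion --- which approximates the directional width of $Q'$ to relative error $\eps/3$ \emph{in every direction simultaneously}; the optimal $S_2$ is then found exactly on the coreset. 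That relative (rather than lever-arm additive) guarantee is the missing idea, and it is the actual source of both the $\eps^{-3/2}$ and the $\log^2(1/\eps)$ factors in the stated running time, which your accounting leaves unexplained. A further concrete flaw: you propose reusing the halfplane extreme-point structure $D(P)$ from Section~\ref{sec:algorithm}, but building it requires $\Theta(n^2\log n)$ time and $\Theta(n^2)$ space, which by itself exceeds the near-linear bound claimed in the lemma; the paper's sweep needs only sorting plus the coreset updates.
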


Applying this procedure at most~11 times, we obtain our desired
approximation algorithm:
\begin{theorem}
  A V-shape covering $P$ and of width at most $(1+\eps)\wopt$ can be
  constructed in time $O((n/\eps)\log n + (n/\eps^{3/2})\log^2(1/\eps))$.
\end{theorem}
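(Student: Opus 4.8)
The plan is to derive this statement as an immediate corollary of Lemma~\ref{lem:fixed-anchor-pair} together with the anchor-pair machinery recalled just above. First I would invoke Lemma~3.3 of~\cite{2-line-center} to compute, in $O(n\log n)$ time, a set of at most $11$ candidate pairs of points of $P$ with the guarantee that, for \emph{every} cover of $P$ by two congruent strips, at least one of these pairs is an anchor pair for one of the two strips. Since any V-shape covering $P$ is contained in the union of its two strips, this guarantee applies verbatim to the two strips of any covering V-shape, and in particular to the two strips of a minimum-width covering V-shape. Then, for each of the (constantly many) candidate pairs $p,q$, I would run the procedure of Lemma~\ref{lem:fixed-anchor-pair} and output the narrowest V-shape produced over all candidates.

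The correctness argument I would give runs as follows. Let $V^\star$ be a minimum-width V-shape covering $P$, so that $\width(V^\star)=\wopt$. By the guarantee above, one of the candidate pairs, say $p^\star,q^\star$, is an anchor pair for one of the two strips of $V^\star$. Hence $V^\star$ is itself a V-shape covering $P$ for which $p^\star,q^\star$ is an anchor pair, so the minimum width over all V-shapes covering $P$ for which $p^\star,q^\star$ is an anchor pair is at most $\wopt$. Applying Lemma~\ref{lem:fixed-anchor-pair} to the pair $p^\star,q^\star$ therefore yields a covering V-shape of width at most $(1+\eps)$ times this quantity, hence at most $(1+\eps)\wopt$. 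Since the algorithm returns the overall narrowest V-shape it finds, and since every candidate pair yields a \emph{valid} covering V-shape, the returned V-shape covers $P$ and has width at most $(1+\eps)\wopt$.

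For the running time, the anchor-pair step costs $O(n\log n)$, and each of the at most $11$ calls to the procedure of Lemma~\ref{lem:fixed-anchor-pair} costs $O((n/\eps)\log n + (n/\eps^{3/2})\log^2(1/\eps))$. As $11$ is an absolute constant, the total cost of these calls is of the same asymptotic order, and the $O(n\log n)$ preprocessing term is dominated by $O((n/\eps)\log n)$, giving the claimed bound. The only genuine content lies in Lemma~\ref{lem:fixed-anchor-pair}; the present statement is a thin wrapper around it, and I do not anticipate any obstacle here beyond correctly quoting the anchor-pair guarantee for the two-strip cover induced by the optimal V-shape and checking that the constant number of invocations is absorbed into the asymptotics.
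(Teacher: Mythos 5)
Your proposal is correct and is exactly the paper's argument: the paper also obtains this theorem by computing the at most 11 candidate anchor pairs via Lemma~3.3 of~\cite{2-line-center} in $O(n\log n)$ time, invoking Lemma~\ref{lem:fixed-anchor-pair} once per candidate pair, and returning the narrowest covering V-shape found. The paper states this in one line (``Applying this procedure at most~11 times\ldots''), whereas you have merely spelled out the same correctness and running-time accounting in full.
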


We first prove that it is sufficient to consider those V-shapes~$V$
with anchor pair $p,q$, for which the strip containing $p,q$ has one
of a small set of fixed directions.  Setting $\beta:=
\sin^{-1}\min\{\eps\cdot\width(V)/(6 d(p,q)),1\}$ and
$\gamma:= \beta + \sin^{-1}(\min\{1,\width(V)/d(p,q)\})$, we
prove the following

\begin{lemma}
  \label{lem:rotate-strip}
  Let V-shape $V$ cover $P$, and let $p,q$ be an anchor pair for
  $S_1(V)$.  Rotating $S_1(v)$ by an angle at most $\beta$ does not
  increase the width of the V-shape by more than a factor of
  $1+\eps/3$, and the angle between $pq$ and the direction of the
  rotated strip cannot exceed $\gamma$.
\end{lemma}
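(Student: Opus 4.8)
The plan is to prove the two assertions separately, since both reduce to elementary trigonometric estimates once we feed in the anchor-pair inequality $\mathrm{diam}(P\cap S_1)\le 2\,d(p,q)$ (which is exactly what it means for $p,q$ to be an anchor pair for $S_1$).

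For the width assertion, I would first record how the perpendicular extent of a bounded point set changes as the measuring direction is rotated. Let $n$ be the unit normal of $S_1$ and let $n'$ be the normal after rotating the strip by an angle $\theta\le\beta$, so that $n'=n\cos\theta+n^{\perp}\sin\theta$ with $n^{\perp}\perp n$. For any two points $u,v\in P\cap S_1$ we then have $\langle u-v,n'\rangle=\langle u-v,n\rangle\cos\theta+\langle u-v,n^{\perp}\rangle\sin\theta$, whence $|\langle u-v,n'\rangle|\le\width(S_1)+\mathrm{diam}(P\cap S_1)\sin\theta$. Consequently the minimum-width strip in the new direction that still covers $P\cap S_1$ has width at most $\width(S_1)+\mathrm{diam}(P\cap S_1)\sin\theta$.

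Next I would substitute the anchor-pair bound together with the definition of $\beta$. Since $\mathrm{diam}(P\cap S_1)\le 2\,d(p,q)$, $\sin\theta\le\sin\beta\le\eps\cdot\width(V)/(6\,d(p,q))$, and $\width(S_1)\le\width(V)$, the rotated (and re-widened) strip has width at most $\width(V)+\eps\,\width(V)/3=(1+\eps/3)\width(V)$. Because $S_2$ is left untouched and $\width(S_2)\le\width(V)$, the width of the resulting V-shape, being the larger of its two arm widths, is at most $(1+\eps/3)\width(V)$, which is the first claim. For the angle assertion I would bound the initial angle $\phi$ between the segment $pq$ and the direction of $S_1$: since $p,q\in S_1$, the component of $p-q$ perpendicular to the strip direction is at most $\width(S_1)\le\width(V)$, so $\sin\phi\le\min\{1,\width(V)/d(p,q)\}$, i.e. $\phi\le\sin^{-1}(\min\{1,\width(V)/d(p,q)\})$. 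Rotating the strip by at most $\beta$ changes this angle by at most $\beta$, so the angle between $pq$ and the rotated strip is at most $\beta+\sin^{-1}(\min\{1,\width(V)/d(p,q)\})=\gamma$.

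The only genuinely non-routine point, and hence the main obstacle, is the extent-under-rotation estimate and the discipline of invoking the anchor-pair property through the \emph{diameter} rather than through $d(p,q)$ alone: it is $\mathrm{diam}(P\cap S_1)$ that governs how fast the strip must widen under rotation, and the factor $2$ relating it to $d(p,q)$ is precisely what the $6$ in the definition of $\beta$ is tuned to absorb. The clamping by $\min\{\,\cdot\,,1\}$ inside the two $\sin^{-1}$ expressions is what keeps all the bounds well-defined (and the claims true) even in the regime where $d(p,q)$ is small relative to $\width(V)$.
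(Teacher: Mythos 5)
Your proof is correct and follows essentially the same route as the paper: the paper phrases the extent-under-rotation estimate via the minimal bounding box of $P\cap S_1$ (length $s\le\mathrm{diam}(P\cap S_1)\le 2\,d(p,q)$ along the strip, width $t\le w$), obtaining $\width(S_1')\le s\sin\alpha+t\cos\alpha$, which is exactly your inner-product bound $\width(S_1)\cos\theta+\mathrm{diam}(P\cap S_1)\sin\theta$, and both arguments then invoke the anchor-pair inequality and the definition of $\beta$ to get the factor $1+\eps/3$, and bound the angle by $\beta$ plus the initial tilt of $pq$ inside $S_1$. If anything, your write-up is slightly cleaner, since the paper's displayed chain contains two small slips (it ends with ``$\le t(1+\eps/3)$'' where $w(1+\eps/3)$ is meant, and takes a $\min$ of the two arm widths where $\max$ is meant) that your version avoids.
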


\begin{proof}
  Put $w:=\width(V)$.  Let $B$ be the minimum bounding box of $P \cap
  S_1$.  More precisely, it is the shortest rectangle cut out of $S_1$
  by two lines perpendicular to $S_1$ and containing $P \cap S_1$;
  refer to Figure~\ref{fig:rotated}.  Let $s$ and $t\leq w$ be the
  length (along the axis of $S_1$) and width of $B$, respectively.
  \begin{figure}
 \centering
    \includegraphics[width=0.5\textwidth]{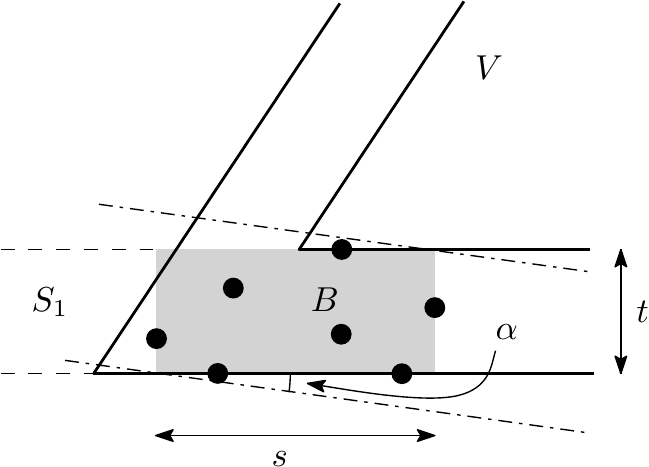}
    \caption{Rotation by $\alpha$ does not change the width by much.}
    \label{fig:rotated}
 \end{figure}
  Let $S'_1$ be the minimal parallel strip containing $B\cap V$, whose
  direction is $\alpha \leq \beta$ away from that of $S_1$ (there are two choices for $S'_1$, corresponding to
  rotating clockwise and counterclockwise; only one is shown; the
  argument applies to both cases).  
  Then
  \begin{align*}
    \width(S'_1) & \leq s \sin \alpha + t \cos \alpha 
    \leq  2d(p,q) \sin \alpha + w \\
    & \leq w ( 1 + 2 \frac{d(p,q)}{w} \sin\alpha ) \leq t (1 + \eps/3),
  \end{align*}
  since $\sin \alpha \leq \sin\beta \leq \eps w/(6 d(p,q))$.  Now
  replace $S_1$ by $S_1'$ to obtain a new V-shape $V'$ covering~$P$.
  Its width is $\min\{\width(S_1'),\width(S_2)\}\leq(1+\eps/3)w$, as
  claimed.

  Observe that in the above construction, the angle between $pq$ and
  the direction of $S_1'$ cannot exceed
  \[\alpha+\sin^{-1}(\min\{1,t/d(p,q)\}) \leq \beta +
  \sin^{-1}(\min\{1,\width(V)/d(p,q)\}) = \gamma.\] \hfill $\Box$
  \end{proof} 

We conclude that enumerating all V-shapes that contain $p,q$ in their
strip $S_1$ and whose directions are (a)~at most $\gamma$ away from
that of $d(p,q)$ and (b) spaced at most $\beta$ apart, would yield a
V-shape whose existence is claimed in
Lemma~\ref{lem:fixed-anchor-pair}.  The number of directions to be
tested is at most $O(\gamma/\beta)=O(1/\eps)$.

Given a candidate anchor pair $p,q$, the algorithm proceeds by
starting with the direction $pq$.  Since we need not consider V-shapes
whose width is larger than the approximate width $\wapp$ computed in
Section~\ref{sec:13-approx} (this is where the 13-approximation
algorithm is used to bootstrap our $(1+\eps)$-approximation), we
replace $\width(V)$ by the smaller $\wapp/13$ in the
definition of $\beta$ above and by the larger $\wapp$ in the
definition of $\gamma$, thereby erring on the conservative side in
each case.  Having computed (conservative estimates of) $\beta$ and
$\gamma$, we enumerate the $O(1/\eps)$ directions of the form
$\theta_i:=\theta_{pq}+i\beta$, where $\theta_{pq}$ is the direction
of $pq$ and $i$ is an integer ranging from $-\lceil
\gamma/\beta\rceil$ to $\lceil \gamma/\beta\rceil$.  It remains to
explain how to deal with one such direction~$\theta:=\theta_i$.

\begin{lemma}
  \label{lem:apx-one-direction}
  One can compute a canonical V-shape $V$ covering $P$ with one arm in
  given direction $\theta$ and width at most $1+\eps/3$ times the
  minimum width of any such V-shape, in time $O(n\log n +
  (n/\eps^{1/2})\log^2(1/\eps))$.
\end{lemma}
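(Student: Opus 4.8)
The plan is to adapt the fixed-direction machinery for approximate two-strip covers from \cite{2-line-center} to our setting, where one strip is forced to have the prescribed direction~$\theta$ and the two strips must additionally cross and bound a genuine V-shape. First I would rotate the plane so that $\theta$ is horizontal; then the strip $S_1$ of the arm in direction $\theta$ is a horizontal slab and $\width(S_1)$ is the vertical extent of $P\cap S_1$. Every point of $P$ outside the slab must be covered by the second strip $S_2$, so if the slab occupies $a\le Y\le b$, the complement $\{Y<a\}\cup\{Y>b\}$, consisting of a ``bottom group'' and a ``top group'', is exactly the set that $S_2$ must enclose. Thus the problem reduces to choosing the two thresholds $a<b$ (which, by canonicity, pass through points of $P$) so as to minimize $\max(b-a,\width(\{Y<a\}\cup\{Y>b\}))$, subject to the configuration actually closing up into a V-shape: the bottom and top groups must land in the two opposite ends of $S_2$ with the inner/outer ray incidences of a canonical V-shape. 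The canonical classification together with the acute-angle condition of Corollary~\ref{Cor:angle} lets me test the validity of each candidate in $O(\log n)$ time.

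To control the running time I would exploit a monotonicity: for a fixed lower threshold $a$, the slab width $b-a$ increases with $b$ while $\width(\{Y<a\}\cup\{Y>b\})$ is nonincreasing (the top group only shrinks), so the balanced optimum moves monotonically with $a$. A two-pointer sweep over the $O(n)$ candidate heights then visits only $O(n)$ threshold pairs. After an $O(n\log n)$ sort, the dominant cost is evaluating the complement width at these $O(n)$ pairs.

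The speedup to $O(1/\eps^{1/2})$ per evaluation comes from replacing the exact width by the extent over a set of $O(1/\eps^{1/2})$ carefully spaced directions, equivalently from an $\eps$-kernel of that size in the style of \cite{2-line-center}. The crucial structural point is that directional extent decomposes over a union, $\mathrm{ext}(A\cup B,u)=\max_{A\cup B}\langle\cdot,u\rangle-\min_{A\cup B}\langle\cdot,u\rangle$, so for each sample direction I can precompute, bottom-up and top-down along the sorted order, the prefix extent of the bottom group and the suffix extent of the top group; each complement-width query is then a minimum of $O(1/\eps^{1/2})$ precomputed combinations. Answering this minimum through a hierarchy over the directions, repeated along the two-pointer sweep, accounts for the $\log^2(1/\eps)$ factor, and the precomputation costs only $O(n/\eps^{1/2})$ after the sort.

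The main obstacle is error control. An $\eps$-kernel of size $O(1/\eps^{1/2})$ guarantees only an \emph{additive} error of $O(\eps\cdot\mathrm{diam})$ in every directional width, whereas I need a \emph{multiplicative} $(1+\eps/3)$ bound on $\width(S_2)$, and $\mathrm{diam}$ may greatly exceed the optimal width. The way out is to use the data already available from the bootstrapping step: since we examine only V-shapes of width between $\wapp/13$ and $\wapp$ (Section~\ref{sec:13-approx}), and the relevant diameter is tied to the anchor pair through $\dist(p,q)\ge\mathrm{diam}(P\cap S_1)/2$, one can rescale the target accuracy so that the additive kernel error drops below $(\eps/3)\wopt$ while the kernel size stays $O(1/\eps^{1/2})$. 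Justifying this rescaling, and simultaneously verifying that every configuration produced is a valid canonical \emph{V-shape} covering $P$ rather than merely a pair of strips, is the delicate part of the argument. Combining the $(1+\eps/3)$-accurate width of $S_2$ with the exactly computed slab width then yields the claimed $(1+\eps/3)$-approximation within the stated time bound.
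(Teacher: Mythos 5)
Your reduction starts from a configuration that cannot occur in a V-shape, and this misunderstanding drives the whole algorithm. If $V$ covers $P$ and $S_1$ is the strip of the arm with direction $\theta$, the other arm is bounded by the segment $xy$ (which spans $S_1$ from boundary to boundary) and two parallel rays, so it exits $S_1$ on \emph{one} side only; hence $P\setminus S_1$ lies entirely on one side of $S_1$, and after the canonical translations the boundary of $S_1$ containing the outer ray is a supporting line of $\conv(P)$ in direction $\theta$. So your two free thresholds $a<b$ with a nonempty ``bottom group'' and a nonempty ``top group'' can never ``close up into a V-shape'': one threshold is forced onto the hull, and only one is free. This is exactly the simplification the paper exploits: it fixes $\ell$ as a supporting line of $\conv(P)$ in direction $\theta$, sweeps the single parameter $q$ (the point of $P\cap S_1$ farthest from $\ell$) in order of decreasing distance, and maintains $Q'=P\setminus Q$ under insertions only. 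Your two-pointer sweep over pairs $(a,b)$ and the accompanying monotonicity claim are machinery for a different problem (a general 2-strip cover with one horizontal strip), and your $O(\log n)$ validity filter via Corollary~\ref{Cor:angle} is asserted with no mechanism; the paper enforces validity instead by excluding directions of $S_2$ between the outer common tangents of $\conv(Q)$ and $\conv(Q')$, where a minimal strip either splits $P$ or swallows it (the latter never optimal, as in Lemma~\ref{lem:no-empty-strip}).

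The approximation step contains a second, independent error. You treat ``extents over $O(1/\eps^{1/2})$ carefully spaced directions'' as ``equivalently'' an $\eps$-kernel of that size; these are not equivalent. For a fixed direction set with spacing $\delta$, the minimum sampled extent overestimates the true width of $Q'$ by an additive term up to $\mathrm{diam}(Q')\cdot\delta$ (already for a long thin rectangle the extent grows \emph{linearly}, with slope roughly the diameter, as the direction rotates away from the optimum), so a multiplicative $(1+\eps/3)$ guarantee forces $\Omega\bigl(\mathrm{diam}(Q')/(\eps\,\wopt)\bigr)$ directions---unbounded in general, and $\Theta(1/\eps)$ even at constant aspect ratio, never $O(1/\eps^{1/2})$. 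True $\eps$-kernels achieve a \emph{relative} guarantee in every direction because their construction is affine-invariant, which no fixed direction set is. Your proposed repair does not close this gap: the anchor-pair inequality $\dist(p,q)\ge\mathrm{diam}(P\cap S_1)/2$ constrains the diameter of the set in the \emph{first} strip, not of $Q'$, and the bootstrap bounds $\wapp/13\le\wopt\le\wapp$ say nothing about the ratio $\mathrm{diam}(Q')/\wopt$. The paper avoids the issue entirely: it maintains Chan's streaming $\eps$-kernel of $Q'$ (a subset whose directional width is at least $1-\eps/3$ that of $Q'$ in \emph{every} direction, of size $O(1/\eps^{1/2})$, with amortized insertion cost $O((1/\eps^{1/2})\log^2(1/\eps))$) along the one-parameter sweep, and for each $q$ computes the exact minimum-width strip of the kernel subject to the angle constraints. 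That is what delivers the stated running time with a genuinely multiplicative error bound, which your direction-sampling scheme cannot provide.
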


\begin{proof}
  We use an approach similar to that of the inner-outer case of our
  exact algorithm 
  with a slight twist.  

  Let $\ell$ be a line in direction~$\theta$ supporting~$\conv(P)$.  We
  again let $q$ be the furthest point from $\ell$ in $Q:=P\cap S_1$
  and let $Q':=P \setminus Q$.  When $q$ is fixed, the minimum-width
  V-shape is determined by the minimum-width strip $S_2$ covering $Q'$
  and not ``splitting'' $P$, i.e., such that it does not have points of
  $P$ on both sides of it.  It is easy to ensure that $S_2$ does not
  split $P$ by observing that a direction of $S_2$ lying between the
  directions of the common outer tangents to $\conv(Q)$ and $\conv(Q')$ is
  never useful.  Depending on the side where the lines supporting
  these tangents cross, a minimal strip $S_2$ covering $Q'$ and lying
  in the range between them either crosses $Q$ (and therefore~$P$) or
  completely covers $Q$ (and therefore~$P$).  In the former case,
  $S_1$ and $S_2$ do not form a legal V-shape covering $P$ and in the
  latter they form a covering V-shape with one empty strip, which 
  never yields minimum width by reasoning as in Lemma~\ref{lem:no-empty-strip}. 

  The width of the resulting V-shape is the maximum of $\dist(q,\ell)$
  and (the restricted) $\width(S_2)$.  The algorithm proceeds by processing points $q$
  in order of decreasing distance to $\ell$, keeping track of
  $\dist(q,\ell)$ and a \emph{coreset} for $Q'$, which is a subset of
  $Q'$ with the property that its directional width, in every
  direction, is at least $1-\eps/3$ that of $Q'$ (and, expanding the
  corresponding minimal strip containing the subset by
  a factor of $1+\eps/3$, we get a strip covering $Q'$).  Chan~\cite{chan04}, in
  Theorem~3.7 and remarks in Section 3.4, describes a streaming
  algorithm that maintains an $O(1/\eps^{1/2})$-size coreset at an
  amortized cost of $O((1/\eps^{1/2})\log^2(1/\eps))$ per insertion.
  For a fixed $q$, we go through the coreset (after computing its
  convex hull, if necessary), and determine the narrowest strip
  covering it and satisfying our angle constraints.  The maximum of
  that and $\dist(q,\ell)$ gives the width of the minimum-width
  V-shape whose boundary passes through~$q$.\footnote{%
    More precisely, $q$ lies on the boundary of $S_1$ and may not even appear
    on the boundary of $V$. However, as before, all V-shapes we
    examine are valid and cover~$P$, and the desired approximating
    V-shape is among them, which is sufficient.}
  The amortized cost per
  point is dominated by the $O((1/\eps^{1/2})\log^2(1/\eps))$ cost of
  insertion.  Together with presorting points by distance from $\ell$,
  the total cost is then $O(n\log n + (n/\eps^{1/2})\log^2(1/\eps))$. \hfill $\Box$
\end{proof}

Combining Lemmas
\ref{lem:rotate-strip}~and~\ref{lem:apx-one-direction} yields the
procedure claimed in Lemma~\ref{lem:fixed-anchor-pair} and thereby
completes our description of the $(1+\eps)$-approximation algorithm.


\section{Concluding remarks}\label{sec:conclusion}

As mentioned in the introduction, this work was inspired by research
on curve fitting, in the situations where a curve takes a sharp turn.
Besides the exact and approximate versions of the problem studied
above, it would be natural to investigate a variant that can handle 
a small number of outliers.  A natural ``peeling''
approach to the problem would be to eliminate the points defining the
optimal V-shape found by our exact algorithm and trying again.
However, it is easy to construct an example of a point set in which
removing a single point \emph{not} appearing on the boundary of the
minimum-width covering V-shape significantly reduces the width of the
optimum V-shape.

Are there natural assumptions (perhaps in the spirit of ``realistic
input models''~\cite{realistic-input-models} or in the form of
requiring reasonable sampling density) that would be relevant for the
curve-fitting problem, and that would make finding the minimum-width covering
V-shape easier?

Returning to the problem studied in the paper, is it possible to find
an exact minimum-width covering V-shape in subquadratic time?  Is the
problem \textsc{3sum}-hard?  

Is it possible to speed up the approximation algorithm, improving the
dependence of its running time on $\eps$?
Is time $O( n + f(\frac{1}{\eps}))$ achievable?

Finally, we would like to point out that there are other
``reasonable'' definitions for a V-shape, if the goal is to
approximate a sharp turn of a curve: One can imagine defining a
V-shape as the Minkowski sum of a disk with the union of two rays
emanating from a common point as in~\cite{GKS}.  
The width of the
V-shape would be the diameter of the disk.  Can the exact algorithm
from~\cite{GKS} be sped up?  Is there a faster approximation
algorithm?  Is this version of the problem better suited for curve
fitting?

\section*{Acknowledgments}

We thank Piyush Kumar for suggesting the problem studied in this
paper, Pankaj K. Agarwal for extremely useful pointers, Alfredo Hubard
for several helpful conversations, and Sariel Har-Peled for sharing
his wisdom and dispelling some confusion about approximation.




\begin{thebibliography}{99}
\bibitem{coreset-survey}
P. K. Agarwal, S. Har-Peled, and K. R. Varadarajan,
Geometric approximation via coresets,
in
\emph{Current Trends in Combinatorial and Computational Geometry},
(J.~E.~Goodman, J. Pach, and E. Welzl, eds.), MSRI Publications,
Volume 52, Cambridge University Press, New York, 2005, pp.~1--30.

\bibitem{2-line-center}
P. K. Agarwal, C. M. Procopiuc, and K. R. Varadarajan,
A (1 + $\epsilon$)-approximation algorithm for 2-line-center,
\emph{Comput. Geom. Theory Appl.}, 26(2):119--128, 2003. 
 
\bibitem{random-opt-survey}
P. K. Agarwal and S. Sen,
Randomized algorithms for geometric optimization problems,
in
\emph{Handbook of Randomized Computation}, J. Pardalos,
S. Rajasekaran, J.~Reif, and J. Rolim, eds., Kluwer Academic Press,
The Netherlands, 2001, pp.~151--201.

\bibitem{alg-opt-survey}
P. K. Agarwal and M. Sharir,
Algorithmic techniques for geometric optimization,
in \emph{Computer Science Today: Recent Trends and Developments},
Lecture Notes in Computer Science, vol. 1000 (J. van Leeuwen, ed.),
Springer-Verlag, Berlin, 1995, pp.~234--253. 
	 
\bibitem{eff-alg-opt-survey}
P. K. Agarwal and M. Sharir,
Efficient algorithms for geometric optimization,
\emph{ACM Computing Surveys} 30:412--458, 1998.
	
\bibitem{alt-guibas-survey}
H. Alt and L. J. Guibas, 
Discrete geometric shapes: matching, interpolation, and approximation,
Chapter 3 in \emph{Handbook of Computational Geometry}, J.-R. Sack and
J. Urrutia, eds., 1999, pp.~121--153.

\bibitem{AD11} B. Aronov and M. Dulieu, 
How to cover a point set with a V-shape of minimum width,
\emph{Proc. Algorithms Data Stuctures Symp. (WADS'11)}, 2011, to appear.

 \bibitem{ComputationalGeometry}
 M. de Berg, O. Cheong, M. van Kreveld, and M. Overmars, 
 \emph{Computational Geometry: Algorithms and Applications},
 3rd ed., Springer-Verlag, 2008.

\bibitem{realistic-input-models}
M. de Berg, M. Katz, A.F. van der Stappen, and J. Vleugels, 
Realistic input models for geometric algorithms,
\emph{Algorithmica},
34(1):81--97, 2008.

\bibitem{chan-apx-all}
T. M. Chan,
Approximating the diameter, width, smallest enclosing cylinder, and
minimum-width annulus, 
\emph{Inter. J. Comput. Geom. Appl.}, 12:67--85, 2002.

\bibitem{chan04}
T. M. Chan,
Faster core-set constructions and data-stream algorithms in fixed dimensions,
\emph{Comput. Geom. Theory Appl.}, 35:20--35, 2006.


\bibitem{curve-noisy}
S.-W. Cheng, S. Funke, M. Golin, P. Kumar, S.-H. Poon, and E. Ramos, 
Curve reconstruction from noisy samples,
\emph{Comput. Geom. Theory Appl.},
31(1--2):63--100, 2005.

\bibitem{curve-dey} 
T. K. Dey, \emph{Curve and Surface Reconstruction:
  Algorithms with Mathematical Analysis}, Cambridge Monographs on
Applied and Computational Mathematics, Vol.~23, Cambridge Univ. Press, 2007.

\bibitem{Bob}
J. R. Driscoll, N. Sarnak, D. D. Sleator, and R. E. Tarjan, 
Making data structures persistent,
\emph{J. Computer System Sci.},
38(1):86--124, 1989.


\bibitem{EOS86} H. Edelsbrunner, J. O'Rourke, and R. Seidel,
Constructing arrangements of lines and hyperplanes with applications,
\emph{SIAM J. Comput.}, 15:341-363, 1986.

\bibitem{Funke-Ramos}
S. Funke and E. A. Ramos,
Reconstructing a collection of curves with corners and endpoints,
\emph{Proc. 12th ACM-SIAM Symp. Discr. Algorithms (SODA~2001)}, 2001, pp.~344--353.

\bibitem{GKS} A. Glozman, K. Kedem, and G. Shpitalnik, 
Computing a double-ray center for a planar point set,
\emph{Inter. J. Comp. Geom. Appl.}, 2(9):103--123, 1999.

\bibitem{proj-clustering} A. Mhatre and P. Kumar,
Projective clustering and its application to surface reconstruction:
extended abstract, 
\emph{Proc. 22nd Annu. Symp. Comput. Geom.},
2006, pp.~477--478.



 \bibitem{Piyush-private}
 P. Kumar, private communication, 2010.

\bibitem {EmptyPolygons}
R. Pinchasi, R. Radoi\v{c}i\'c, and M. Sharir,
On empty convex polygons in a planar point set,
\emph{J. Combinat. Theory, Series~A},
113:385--419, 2006.

\bibitem{Preparata}
F. P. Preparata,
An optimal real time algorithm for planar convex hulls,
\emph{Comm. ACM},
22:402--405, 1979.
\end{thebibliography}
\end{document}